\definecolor{ocre}{rgb}{0.72,0,0} 
\definecolor{newblue}{rgb}{0.2,0.2,0.6} 
\definecolor{babyblueeyes}{rgb}{0.63, 0.79, 0.95}
\definecolor{newgreen}{rgb}{0.53,0.66,0.42} 
\newcommand{\Dim}{\operatorname{dim}}
\newcommand{\calL}{\mathcal{L}}
\newcommand{\mat}[1]{\mathbf{#1}}
\numberwithin{equation}{section}
\newcommand{\R}{\mathbb{R}}
\newcommand{\poly}{\operatorname{poly}}
\newcommand{\vol}{\operatorname{vol}}
\newcommand{\Span}{\operatorname{span}}
\renewcommand{\leq}{\leqslant}
\renewcommand{\geq}{\geqslant}
\renewcommand{\le}{\leqslant}
\renewcommand{\ge}{\geqslant}
\newcommand{\thmref}[1]{Theorem~\ref{thm:#1}}
\newcommand{\lemref}[1]{Lemma~\ref{lem:#1}}
\newcommand{\figref}[1]{Figure~\ref{fig:#1}}
\newcommand{\secref}[1]{Section~\ref{sec:#1}}
\newcommand{\eq}[1]{\eqref{eq:#1}}
\newcommand{\Pro}[1]{\mathbf{P} \left[\,#1\,\right]}
\newcommand{\Ex}[1]{\mathbf{E} \left[\,#1\,\right]}
\renewcommand{\tilde}{\widetilde}
\renewcommand{\epsilon}{\varepsilon}
\newcommand{\aseq}{\{A_i\}_{i=1}^k}
\newcommand{\Id}{\mathbf{I}}
\newtheorem{thm}{Theorem}[section]  
\newtheorem{theorem}[thm]{Theorem}
\newtheorem{lemma}[thm]{Lemma}
\newtheorem*{rem*}{Remark}
\renewcommand{\tilde}{\widetilde}
\begin{document}

\title{{\bf Distributed Graph Clustering and Sparsification}}

\author{He Sun\\
University of Edinburgh\\ Edinburgh, UK\\
\texttt{h.sun@ed.ac.uk}
\and
Luca Zanetti\\
University of Cambridge\\
Cambridge, UK\\
\texttt{luca.zanetti@cl.cam.ac.uk}
}

\date{}

%
%
%
%

\maketitle

\begin{abstract}
Graph clustering is a fundamental computational problem  with a number of applications in algorithm design, machine learning, data mining, and analysis of social networks. 
Over the past decades,  researchers have proposed  a number of algorithmic design methods for graph clustering. Most of these methods, however, are based on complicated spectral techniques or convex optimisation, and cannot be directly applied  for clustering many networks that occur in practice, whose information is often collected on different sites.  Designing a simple and distributed clustering algorithm is of great 
interest, and has wide applications for processing big datasets.

In this paper we present a simple and distributed algorithm for graph clustering: for a wide class of graphs that are characterised by a strong cluster-structure, our algorithm finishes in a poly-logarithmic number of rounds, and recovers a partition of the graph
close to optimal. One of the main components behind our algorithm is a sampling scheme that, given a dense graph as input, produces a sparse subgraph that provably preserves the cluster-structure of the input. Compared with previous sparsification algorithms that require Laplacian solvers or involve combinatorial constructions, this component is easy to implement in a distributed way and runs fast in practice.

\vspace{0.5cm}

\textbf{Keywords:} graph clustering, graph sparsification, distributed computing

\end{abstract}

\thispagestyle{empty}

\setcounter{page}{0}

\newpage

\section{Introduction}
Analysis of large-scale networks has brought significant advances to our understanding of complex systems. One of the most relevant features of  the networks occurring in practice is their structure of clusters, i.e., an organisation of nodes into clusters such that nodes within the same cluster are highly connected in contrast to nodes from different clusters. Graph clustering is an important research topic
in many disciplines, including computer science,  biology, and sociology. For instance, graph clustering is widely used in finding communities in social networks,  webpages dealing with similar topics, and proteins having the same specific function within the cell in protein-protein interaction networks~\cite{fortunatoPR}. However, 
despite extensive studies on efficient methods for graph clustering,  many approximation algorithms for this problem requires advanced algorithm design techniques, e.g., spectral methods, or convex optimisation, which make the algorithms difficult to be implemented in the distributed setting, where graphs are allocated in sites which are physically remote. Designing a simple  and distributed algorithm is of important interest in practice, and has received considerable attention in recent years~\cite{hui2007distributed, CSWZ16,yang2015divide}.


\subsection{Structure of Clusters}  Let  $G=(V,E,w)$ be an undirected graph with $n$ nodes and weight function $w: V\times V\rightarrow\mathbb{R}_{\geq 0}$.
 For any set $S$,  let the conductance of $S$ be
\[
\phi_G(S)\triangleq\frac{w(S,V\setminus S)}{\vol(S)},
\]
where $w(S, V\setminus S)\triangleq \sum_{u\in S} \sum_{v\in V\setminus S} w(u,v)$ is the total weight of edges 
 between $S$ and $V\setminus S$, and $\vol(S)\triangleq \sum_{u\in S} \sum_{u\sim v} w(u,v)$ is the volume of $S$.  Intuitively, nodes in $S$ form a cluster if $\phi_G(S)$ is small, i.e. there are fewer connections between  the nodes of $S$ to the nodes in $V\setminus S$.  We call subsets of nodes~(i.e.~\emph{clusters}) $A_1,\ldots, A_k$ a \emph{$k$-way partition} of $G$ if $A_i\cap A_j=\emptyset$ for different $i$ and $j$, and $\bigcup_{i=1}^k A_i=V$.   Moreover, we define the  \emph{$k$-way expansion constant} by
\[
\rho(k)\triangleq\min\limits_{\mathrm{partition\ } A_1,
\ldots, A_k }\max_{1\leq i\leq k} \phi_G(A_i).
\]
Computing the exact value of $\rho(k)$ is $\mathsf{coNP}$-hard, and a sequence of results show that $\rho(k)$ can be approximated by algebraic quantities relating to the matrices  of $G$.   For instance,
Lee et al.~\cite{lot} shows the following high-order Cheeger inequality:
\begin{equation}\label{eq:cheeger}
\frac{\lambda_k}{2}\leq \rho(k)\leq O\left(k^2\right)\sqrt{\lambda_k},
\end{equation}
where $0=\lambda_1\leq \cdots\leq \lambda_n\leq 2$ are the eigenvalues of the normalised Laplacian matrix of  $G$.
Based on \eq{cheeger}, we know that  a large gap between $\lambda_{k+1}$ and $\rho(k)$ guarantees (i) existence of a $k$-way partition $S_1,\ldots S_k$ with bounded $\phi_G(S_i)\leq \rho(k)$, and (ii) any $(k+1)$-way partition $A_1,\dots,A_{k+1}$ of $G$ contains a subset $A_i$ with significantly higher conductance $\rho(k+1)\geq \lambda_{k+1}/2$ compared with $\rho(k)$.  Peng et al.~\cite{peng15} formalises these observations by defining the parameter
\[
\Upsilon_G(k)\triangleq\frac{\lambda_{k+1}}{\rho(k)},
\]
and shows that a suitable lower bound on the gap for $\Upsilon_G(k)$
implies that $G$ has $k$ well-defined clusters.


\subsection{Our Results}

The first result of our paper is a  simple algorithm that, given as input any graph $G$ with a well-defined cluster-structure, produces a sparse subgraph $H$ of $G$ that preserves the same cluster-structure of $G$, but has an almost-linear number of edges. The result is summarised as follows:
  
\begin{theorem}\label{main11}
There exists an algorithm that, receiving as input a graph $G=(V,E,w)$ with $k$ clusters and a parameter $\tau$ such that $\tau \ge C/\lambda_{k+1}$ for a large enough constant $C>0$, with probability greater than $0.99$, computes a sparsifier $H=(V,F\subset E,\tilde{w})$ with  $|F|=O(n\tau\cdot\log n)$ edges such that the following holds:
\begin{enumerate}
\item $\Upsilon_H(k)=\Omega(\Upsilon_G(k)/k)$;
\item It holds for any $1\leq i\leq k$ that $\phi_H(S_i)=O(k\cdot \phi_G(S_i))$.
\end{enumerate}
Moreover, this algorithm can be implemented in $O(1)$ rounds in the distributed setting, and the total information exchanged among all nodes is $O(n\tau \cdot\log n)$ words. \end{theorem}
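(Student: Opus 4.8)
The plan is to take as $H$ a sparsifier obtained by \emph{degree-based edge sampling}, which—unlike effective-resistance sampling—needs no global computation. In the abstract version, set $z_{uv}\triangleq w(u,v)\bigl(1/\deg(u)+1/\deg(v)\bigr)$, include each $\{u,v\}\in E$ independently with probability $p_{uv}=\min\{1,\gamma\tau z_{uv}\log n\}$ for a large constant $\gamma$, and re-weight a surviving edge to $\tilde w(u,v)=w(u,v)/p_{uv}$. Distributedly, each node $u$ nominates each incident edge $\{u,v\}$ independently with probability $\min\{1,\gamma\tau\log n\cdot w(u,v)/\deg(u)\}$—using only $\deg(u)$—and forwards $\deg(u)$ to $v$ when it does so; an edge nominated by either endpoint is kept, and an endpoint then knowing both degrees computes $p_{uv}$ and the re-weighting. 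This is $O(1)$ rounds, and the number of nominations, hence of words exchanged, has expectation $\sum_u\sum_{v\sim u}\min\{1,\gamma\tau\log n\cdot w(u,v)/\deg(u)\}\le\sum_u\gamma\tau\log n=\gamma\tau n\log n$; moreover, since $\sum_{\{u,v\}\in E}z_{uv}=\sum_{v\in V}(1/\deg(v))\sum_{u\sim v}w(u,v)=n$, also $\mathbf E[|F|]\le\gamma\tau n\log n$. Markov's inequality then gives $|F|=O(n\tau\log n)$ and the same communication bound with probability at least $0.997$.

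It remains to verify the structural guarantees. Write $L_H=\sum_{\{u,v\}}X_{uv}\tilde w(u,v)b_{uv}b_{uv}^{\rot}$ with $b_{uv}=e_u-e_v$ and $X_{uv}\sim\mathrm{Bernoulli}(p_{uv})$ independent, so $\mathbf E[L_H]=L_G$; a scalar Chernoff bound (each random term of $\vol_H(v)$ is at most $\vol_G(v)/(\gamma\tau\log n)$) gives $\vol_H(v)=\Theta(\vol_G(v))$ for all $v$, so $H$ may be normalised by $D_G$. The core step is a matrix Chernoff bound comparing $\calL_H^G\triangleq D_G^{-1/2}L_H D_G^{-1/2}$ with $\calL_G$; edges with $p_{uv}=1$ are deterministic, so only the random summands must be bounded in the Loewner order. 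For an edge $\{u,v\}$ inside one cluster $S_i$ I would invoke the structural characterisation underlying $\Upsilon_G(k)$—large $\Upsilon_G(k)$ forces the bottom-$k$ eigenvectors of $\calL_G$ to lie near the span of the cluster indicators, hence to be nearly constant within each cluster (in the degree-normalised sense)—to get $w(u,v)\,b_{uv}^{\rot}L_G^{+}b_{uv}=O(z_{uv}/\lambda_{k+1})$: the eigenvectors for $\lambda_2,\dots,\lambda_k$ contribute negligibly because $u,v$ lie in the \emph{same} cluster, while the rest contribute at most $\lambda_{k+1}^{-1}(1/\deg(u)+1/\deg(v))w(u,v)=z_{uv}/\lambda_{k+1}$. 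Hence, for such an edge with $p_{uv}<1$, the corresponding summand of $\calL_H^G$ is $O\bigl(1/(\gamma\tau\lambda_{k+1}\log n)\bigr)\calL_G=O\bigl(1/(\gamma C\log n)\bigr)\calL_G$, using $\tau\ge C/\lambda_{k+1}$. After projecting out the $k$-dimensional cluster-indicator subspace (on which there is nothing to preserve), matrix Chernoff then shows that the within-cluster part of $\calL_H^G$ agrees with the corresponding part of $\calL_G$ up to a constant factor, which via $\phi_G(S)=(\mathbf{1}_S^{\rot}L_G\mathbf{1}_S)/(\mathbf{1}_S^{\rot}D_G\mathbf{1}_S)$ already yields $\phi_H(S_i)=O(\phi_G(S_i))$ and $\lambda_{k+1}(H)=\Omega(\lambda_{k+1}(G))$ \emph{modulo} the crossing edges.

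Handling the crossing edges is the crux, and the reason only $O(k\,\phi_G(S_i))$ rather than $O(\phi_G(S_i))$ is obtained: a crossing edge may have true effective resistance up to $1/w(u,v)\gg z_{uv}$, so it is undersampled relative to a spectral sparsifier and matrix Chernoff fails for the crossing part. Instead I would bound $w_H(S_i,V\setminus S_i)=\sum_{\{u,v\}\text{ crossing},u\in S_i}X_{uv}\tilde w(u,v)$ directly: its mean is $w_G(S_i,V\setminus S_i)$, its $p=1$ terms have no variance, and since $\deg_G(u)\le\vol_G(S_i)$ for $u\in S_i$, every surviving crossing edge at $S_i$ with $p_{uv}<1$ has weight $\tilde w(u,v)\le\deg_G(u)/(\gamma\tau\log n)\le\vol_G(S_i)/(\gamma\tau\log n)$, so the random summands are uniformly bounded by $\vol_G(S_i)/(\gamma\tau\log n)$ and have total variance at most that bound times $w_G(S_i,V\setminus S_i)$; a Bernstein inequality then gives, with high probability, $w_H(S_i,V\setminus S_i)=O\bigl(w_G(S_i,V\setminus S_i)+\vol_G(S_i)/\tau\bigr)$, hence $\phi_H(S_i)=O(\phi_G(S_i)+1/\tau)$. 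The hard part will be absorbing the additive $O(1/\tau)$ into $O(k\,\phi_G(S_i))$: this is immediate when $\phi_G(S_i)$ is bounded below, and when $\phi_G(S_i)$ is small one must argue, more carefully than via the generic Bernstein bound, that with high probability the surviving re-weighted crossing edges at $S_i$ do not total more than $O(k\,w_G(S_i,V\setminus S_i))$—this is where the factor $k$ and the logarithmic oversampling are spent, and is the step I expect to take the most work; a companion low-frequency estimate, parallel to the conductance argument and using that a crossing edge with small sampling probability cannot carry a large share of any eigenvector's energy, shows $\lambda_{k+1}(H)=\Omega(\lambda_{k+1}(G))$. Finally, using $S_1,\dots,S_k$ as a witness partition gives $\rho_H(k)\le\max_i\phi_H(S_i)=O(k\,\rho_G(k))$, so $\Upsilon_H(k)=\lambda_{k+1}(H)/\rho_H(k)=\Omega(\Upsilon_G(k)/k)$, and a union bound over the size event, the volume-preservation event, the matrix-Chernoff event, and the $k$ per-cluster cut events keeps the total failure probability below $0.01$.
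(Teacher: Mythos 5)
Your algorithm, communication analysis, edge-count bound, and final assembly (use $S_1,\dots,S_k$ as a witness partition to bound $\rho_H(k)$, then divide into $\lambda_{k+1}(H)$) all match the paper, but the two central technical steps have genuine gaps. First, the matrix-concentration step: your plan needs the per-edge bound $w(u,v)\, b_e^{\intercal}\calL_G^{-1} b_e = O\bigl(w(u,v)(1/d_u+1/d_v)/\lambda_{k+1}\bigr)$ for every intra-cluster edge, justified by "the bottom eigenvectors are nearly constant within each cluster". The structure theorem behind $\Upsilon_G(k)$ (the statement the paper imports from Peng et al.) only says $\|f_i-\tilde\chi_i\|$ is small in the $\ell_2$ sense; it gives no pointwise control of $f_i(u)-f_i(v)$ on an individual edge, so a single internal edge can still have large effective resistance coming from $f_2,\dots,f_k$, and matrix Chernoff (which needs a uniform bound on $\|X_e\|$) does not go through as you set it up. Worse, having excluded the crossing edges from the concentration argument, your claim $\lambda_{k+1}(H)=\Omega(\lambda_{k+1}(G))$ rests on an unstated "companion low-frequency estimate". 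The fix — and the paper's actual route — is to conjugate not by $\calL_G^{-1/2}$ or a cluster-indicator projection but by $\overline{\calL}_G^{-1/2}$, the square-root pseudoinverse restricted to the top $n-k$ eigenspace of $\calL_G$. Then for \emph{every} edge, crossing or not, $\|X_e\|\le w_H(u,v)\,\|\overline{\calL}_G^{-1}\|\,\|b_e\|^2\le \frac{2}{\tau\log n\,(1/d_u+1/d_v)}\cdot\frac{1}{\lambda_{k+1}}\cdot(1/d_u+1/d_v)\le \frac{2}{C\log n}$, no structure theorem and no internal/crossing case split is needed, and one application of matrix Chernoff plus Courant--Fischer on the $(n-k)$-dimensional space gives $\lambda_{k+1}(\calL_H)=\Theta(\lambda_{k+1}(\calL_G))$ outright.

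Second, the cut bound: your Bernstein estimate yields $w_H(S_i,V\setminus S_i)=O\bigl(w_G(S_i,V\setminus S_i)+\vol_G(S_i)/\tau\bigr)$, and you correctly flag that absorbing the additive $\vol_G(S_i)/\tau$ term into $O(k\,\phi_G(S_i))$ is unresolved; as stated this is a real hole, since for clusters with very small $\phi_G(S_i)$ no high-probability-per-cluster bound around the mean can be purely multiplicative (the crossing edges at $S_i$ may consist of a handful of low-probability samples whose reweighted value, when nonzero, already exceeds $k\,w_G(S_i,V\setminus S_i)$). The paper sidesteps this entirely by noting the theorem only demands constant success probability: since $\Ex{w_H(S_i,V\setminus S_i)}=w_G(S_i,V\setminus S_i)$, Markov's inequality with threshold $\Theta(k)\cdot w_G(S_i,V\setminus S_i)$ and a union bound over the $k$ clusters give $w_H(S_i,V\setminus S_i)=O(k\cdot w_G(S_i,V\setminus S_i))$ simultaneously for all $i$ with constant probability — this is precisely where the factor $k$ in the statement comes from. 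Your per-node Chernoff argument for $\vol_H(v)=\Theta(d_v)$ is fine (the paper proves the weaker per-cluster statement via Hoeffding, attributing each internal edge to its lower-degree endpoint), so once the cut bound is repaired this part of your plan stands.
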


The first property $\Upsilon_H=\Omega(\Upsilon_G(k)/k)$ of the theorem ensures that the gap in  $H$ is preserved as long as $\Upsilon_G(k)\gg k$. The second property further shows that the conductance of 
each optimal cluster $S_i$ in $G$  
is approximately preserved in $H$ up to a factor of $k$, therefore  $S_i$ is  a low-conductance subset in $H$ as well. 
We remark that these $k$ clusters $S_1,\ldots, S_k$ might not form an optimal clustering in $H$ anymore.  However, this is not an issue, since every cluster with low conductance in $H$ has high overlap with its optimal correspondence. Hence, any algorithm that recovers a clustering close to the optimal one in $H$  will recover a clustering close to the optimal one in $G$.
Moreover, since $\lambda_{k+1}$ represents the inner-connectivity of the clusters, it is usually quite high: for most interesting cases we can assume $\lambda_{k+1} = \Omega\left(1/\poly(\log{n})\right)$. Indeed, the experiments described in \secref{exp} show $\tau \le 2$ works for all the tested datasets. 

The second result of the paper is a distributed algorithm to  partition a graph $G$ that possesses a cluster-structure with clusters of balanced size. The result is summarised  as follows:
\begin{theorem}\label{thm:cluster}
There is a distributed algorithm that, given as input a  graph $G=(V,E,w)$ with $n$ nodes, $m$ edges, and $k$ optimal clusters $S_1,\ldots, S_k$ with $\vol(S_i) \ge \beta \vol(V)$ for any $1\le i \le k$ and
 \begin{equation}
\label{eq:gapabs}
\Upsilon_G(k) = \omega\left( k^4 \log^2\frac{1}{\beta} + \log{n} \right),
\end{equation} 
  finishes in
\[
T \triangleq \Theta\left(\frac{\log n}{\lambda_{k+1}}\right)
\]
rounds, and with probability greater than $0.99$  the following statements hold: 
\begin{enumerate}
\item Each node $v$ receives a label $\ell_v$ such that the total volume of misclassified nodes is $o(\vol(V))$, i.e., under a possible permutation of the labels $\sigma$, it holds that 
\[
\vol\left(\bigcup_{i=1}^k \left\{v | v\in S_i\mbox{\ and\ } \ell_v\neq \sigma(i)  \right\}\right)=o(\vol(V));
\]
\item The total information exchanged among these $n$ nodes, i.e., the message complexity, is $O\left(T\cdot m\cdot \frac{1}{\beta}\log{\frac{1}{\beta}}\right)$ words.
\end{enumerate}
\end{theorem}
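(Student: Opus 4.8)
The plan is to reduce the clustering task to running a small number of random walks in parallel on $G$ and classifying every node according to the walk distributions it observes. (When $G$ is very dense one may first invoke Theorem~\ref{main11} with $\tau=\Theta(1/\lambda_{k+1})$ to thin it out to a cluster-preserving $H$ with $O(n\log n/\lambda_{k+1})$ edges: the two guarantees there show that $H$ satisfies a $\poly(k)$-weakened form of~\eqref{eq:gapabs} and that a labelling accurate for the clusters of $H$ is accurate for $S_1,\dots,S_k$ in $G$; but since the message bound is stated in terms of $m$, I describe the algorithm directly on $G$.)

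Sample a set $R$ of $r:=\Theta\!\big(\tfrac1\beta\log\tfrac1\beta\big)$ seed nodes independently with probability proportional to degree; because every cluster has volume at least $\beta\vol(V)$, a union bound over the $k\le 1/\beta$ clusters shows that, with probability $\ge 0.999$, each $S_i$ contains at least one seed. From each seed run a lazy random walk (transition matrix $W$) for $T$ steps, all in parallel: this takes $T$ rounds, and since in each round each edge forwards one $O(\log n)$-bit probability value per walk (the finite precision contributing only a $\poly(1/n)$ additive error, which I would carry along and discard at the end), the total communication is $O\!\big(T\cdot m\cdot\tfrac1\beta\log\tfrac1\beta\big)$ words. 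Each node $v$ then holds the vector $x_v\in\R^r$ with entries $p^{(s)}_T(v)/\deg(v)$, $s\in R$. The crux is that $x_v$ essentially determines the cluster of $v$. Expanding $W^T$ in the eigenbasis of the normalised adjacency matrix: for $i\le k$ one has $\lambda_i\le\lambda_k\le 2\rho(k)=2\lambda_{k+1}/\Upsilon_G(k)$, hence $(1-\lambda_i)^T=1-O(\log n/\Upsilon_G(k))$, while every eigenvalue above the gap contributes at most $(1-\lambda_{k+1})^T\le e^{-\lambda_{k+1}T}=\poly(1/n)$; so $W^T$ is close to the projection onto $\Span\{f_1,\dots,f_k\}$, which by the structure theorem for well-clustered graphs (Peng et al.~\cite{peng15}) is Frobenius-close, with error $O(\poly(k)/\Upsilon_G(k))$, to $\Span\{D^{1/2}\mathbf{1}_{S_1},\dots,D^{1/2}\mathbf{1}_{S_k}\}$. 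Up to this error, $p^{(s)}_T(v)=\deg(v)/\vol(S_i)$ when $s,v\in S_i$ and $0$ otherwise, so the ideal $x_v$ is supported exactly on the seeds lying in $v$'s own cluster $S_i$, with constant value $1/\vol(S_i)$ there; after normalisation, same-cluster nodes have (near-)identical embeddings and different-cluster nodes have (near-)orthogonal ones.

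It then suffices for the seeds to agree on a labelling of themselves --- the seeds split into $k$ groups according to which cluster they fall in, which is visible from their own embeddings, at a cost of $O(T)$ more rounds and $\tilde{O}(r^2)$ words --- and to label each node $v$ by the group of the seed giving $v$ the largest mass. In the ideal picture this is correct for every $v$; in general $v$ is misclassified only when the error $\|x_v-x_v^{\mathrm{ideal}}\|$ at $v$ crosses a threshold of order $1/\vol(S_i)$ (a constant, after normalisation), so a volume-weighted Markov inequality bounds the misclassified volume by roughly $\vol(V)$ times the squared Frobenius error of the matrix approximation above, adjusted by the seed and normalisation factors. Carrying out this accounting --- and tracking the polynomial and logarithmic losses coming from the structure theorem, from the number of seeds needed to hit every cluster, and from the balancedness $\vol(S_i)\ge\beta\vol(V)$ --- shows that the misclassified volume is $o(\vol(V))$ exactly under the hypothesis $\Upsilon_G(k)=\omega\!\big(k^4\log^2\tfrac1\beta+\log n\big)$, which is conclusion~(1); conclusion~(2) is the communication tally recorded above.

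The main obstacle is this last, quantitative step. The ``signal'' $1/\vol(S_i)$ that separates clusters is small, whereas the structure-theorem error is controlled only globally, in Frobenius norm, so one has to pass from a single global $\ell_2$ bound to a per-node, volume-weighted guarantee, and then pin down the precise dependence on $k$, $\log(1/\beta)$ and $\log n$ so that it collapses to the clean assumption~\eqref{eq:gapabs} --- in particular, getting the $k$-dependence down to $k^4$ and keeping the $\beta$-dependence inside a $\log^2(1/\beta)$ (rather than a $\poly(1/\beta)$ that a naive argument would give) is where most of the effort goes. A secondary, more routine matter is to fit the walk simulation, the degree-proportional seed sampling, and the seed-labelling step within the stated round and message budgets while incurring only $\poly(1/n)$ additive error from finite-precision messages.
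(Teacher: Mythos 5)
Your plan follows the same route as the paper's proof: degree-proportional seeding with $\Theta(\frac{1}{\beta}\log\frac{1}{\beta})$ seeds, $T=\Theta(\log n/\lambda_{k+1})$ rounds of lazy averaging viewed as a power iteration, the observation that $(1-\lambda_i/2)^T$ is $1-O(\log n/\Upsilon_G(k))$ below the gap (via the easy direction of higher-order Cheeger) and $n^{-\Omega(1)}$ above it, the Peng et al.\ structure theorem relating $\Span\{f_1,\dots,f_k\}$ to $\Span\{\chi_{S_1},\dots,\chi_{S_k}\}$, and a threshold/largest-mass classification whose error volume is controlled by a volume-weighted Markov argument. However, the step you yourself single out as ``the main obstacle'' --- converting the global $\ell_2$/Frobenius control of the structure theorem into a per-node, volume-weighted guarantee with the right dependence on $k$, $\log(1/\beta)$ and $\log n$ --- is exactly the heart of the paper's argument (its Lemma~4.4 and the final accounting), and your proposal only asserts that ``carrying out this accounting'' gives $o(\vol(V))$ under \eqref{eq:gapabs} without doing it. Concretely, what is missing is: (i) the per-node quantity $\alpha_v^2=\frac{1}{d_v}\sum_{i\le k}(f_i(v)-\widehat{\chi}_i(v))^2$ with $\sum_v d_v\alpha_v^2=O(k^4/\Upsilon_G(k))$, and the resulting bad set $B=\{v:\alpha_v^2>Ck^4\log(1/\beta)/(\Upsilon_G(k)\beta\vol(V))\}$ whose volume is at most $\beta\vol(V)/(C\log(1/\beta))$ --- this threshold is tuned precisely so that a union bound over the $\Theta(\frac1\beta\log\frac1\beta)$ seeds keeps \emph{all seeds} outside $B$ with constant probability, a requirement on the seeds themselves that your sketch does not isolate (you only discuss the error ``at $v$''); (ii) the per-good-seed bound $\|x^{(T)}-\chi_{S_j}/\sqrt{\vol(S_j)}\|=O(\sqrt{k^4\log(1/\beta)/(\Upsilon_G(k)\beta\vol(V))})$, which combines this with the power-iteration error; and (iii) the final count: a node is misclassified only if its ($D^{-1/2}$-scaled) entry deviates by more than the query threshold $\frac{1}{2\beta\vol(V)}$, so the misclassified volume is at most $\bar s$ times the squared per-seed error times $4\beta^2\vol(V)^2=O\bigl(\frac{k^4}{\Upsilon_G(k)}\log^2\frac1\beta\bigr)\vol(V)$, which is $o(\vol(V))$ exactly under \eqref{eq:gapabs}. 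Without this chain the claimed conclusion is not established, so the proposal has a genuine gap even though the strategy is the right one.

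Two smaller points. First, your seed-coordination phase ($O(T)$ extra rounds and $\tilde{O}(r^2)$ words for the seeds ``to agree on a labelling of themselves'') is unnecessary and not obviously implementable as stated, since seeds need not be adjacent; the paper sidesteps it with a purely local query rule, $\ell_v=\min\{i: x^{(T,i)}(v)\ge \sqrt{d_v}/(2\beta\vol(V))\}$, which requires no communication after the averaging rounds (each seed can anyway determine locally which other walks give it large mass). Second, your proposed per-node normalisation of the embedding is risky for low-degree nodes and is avoided by the fixed threshold above; and note the paper's update uses the symmetric matrix $\mathbf{I}-\frac12\calL_G$ acting on $\chi_v$, i.e.\ the $D^{1/2}$-conjugate of your lazy walk, which is equivalent but makes the orthonormal eigenbasis calculations cleaner.
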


As a direct application of the two theorems above, we look at the graph $G$ that consists of $k=O(1)$  expander graphs of almost balanced size connected by sparse cuts. By first applying the sparsification algorithm from Theorem~\ref{main11}, we obtain a sparse subgraph $H$ of $G$ that has a very similar cluster-structure to $G$, and this graph $H$ is obtained with total communication cost $O(n\log n)$ words. Then, we apply the distributed clustering algorithm~(\thmref{cluster}) on  $H$,  which has $O(n\log n)$ edges. The distributed clustering algorithm finishes in $O(\log n)$ rounds,  has total communication cost $O(n\cdot\mathrm{poly}\log n)$ words, and the volume of the misclassified nodes is $o(\vol(V))$. Notice that the communication cost of the two algorithms together is $O(n\cdot\poly\log n)$ words, which is sublinear in the size of $G$ for a dense input graph.

\subsection{Related Work}

There is a large amount of literature on graph clustering, and our work is most closely related to efficient algorithms for graph clustering under different formulations of clusters. Oveis Gharan and Trevisan~\cite{conf/soda/GharanT14} formulate the notion of  clusters with respect to the \emph{inner} and $\emph{outer}$ conductance: a cluster $S$ should have  low outer conductance, and the conductance of the induced subgraph by $S$ should be high.  Under a gap assumption between $\lambda_{k+1}$ and $\lambda_k$, they
  present a polynomial-time algorithm which finds a $k$-way partition $\aseq$ that satisfies the inner and outer conductance condition. 
    

Allen Zhu et al.~\cite{conf/icml/ZhuLM13} studies graph clustering with a gap assumption similar to ours, and presents a local algorithm with better approximation guarantee under the gap assumption.
However, the setup of our algorithms differs significantly from most local graph clustering algorithms~\cite{GharanT12,conf/icml/ZhuLM13,SpielmanT13}  for the following reasons:  (1) We need to run a local algorithm $k$ times in order to find $k$ clusters. However, as the output of each execution of a local algorithm only returns an \emph{approximate} cluster, 
 the approximation ratio of the final output cluster might not be guaranteed when the value of $k$ is large. (2) For many instances, our algorithm requires only a poly-logarithmic number of rounds, while  local algorithms run in time proportional to the volume of the output set. It is unclear how these algorithms could finish in a poly-logarithmic number of rounds, even if we were able to implement them in the distributed setting. 
 

 Becchetti et al.~\cite{becchetti15} studies a distributed process to partition an almost-regular graph into clusters, and their analysis focuses mostly on graphs generated randomly from stochastic block models. 
In contrast to ours, their algorithm requires every node to exchange information with all of its neighbours in each round, and thus has significantly higher communication cost. Moreover,  the design and analysis of our algorithm succeeds to overcome  their regularity constraint as well by an alternative averaging rule.

We  notice that the distributed algorithm presented in Kempe et al.~\cite{kempe04} for computing the top $k$ eigenvectors of the adjacency matrix of a graph can be applied for graph clustering.
Their algorithm, however,  is much more involved than ours. 
Moreover, for an input graph $G$ of $n$ nodes, the number of rounds required in their algorithm is proportional to the mixing time of a random walk in $G$. 
For a graph consisting of multiple expanders connected by very few edges, their algorithm requires $O(\mathrm{poly}(n))$ rounds, 
which is much higher than $O(\poly\log n)$ rounds needed for our algorithm.

Another line of research closely related to our work is graph sparsification, including both   cut sparsification~\cite{benczur_karger} and 
spectral sparsification~\cite{BSS,LS15,LS17,SS11,spielmanTengSS}. The constructions of both cut and spectral sparsifiers, however, are quite complicated or require solving Laplacian systems, while our algorithm is simply based on sampling and easy to implement.
The idea of using sparsification to reduce the communication complexity for clustering a graph in the distributed setting is first proposed by \cite{CSWZ16}. They assume the graph is distributed across $s$ servers, while our work considers more extreme distributed settings: each node of the graph is a computational unit. Our algorithms, however, work in their distributed model as well. Furthermore, we emphasise that the sparsification schemes of \cite{CSWZ16} require the computation of effective resistances, which is very expensive in practice, while our scheme is much simpler and faster.

\subsection{Organisation}

The remaining part of the paper is organised as follows: \secref{pre} lists the  notations used in the paper. We present and analyse the sparsification algorithm in \secref{sparse}, and prove Theorem~\ref{main11}. \secref{cluster} is to present the distributed algorithm for graph clustering, which corresponds to \thmref{cluster}.  We report the experimental results of our sparsification algorithm in \secref{exp}.

\section{Preliminaries}\label{sec:pre}

Let $G=(V,E,w)$ be an undirected weighted graph with $n$ nodes and weight function $w:E \rightarrow \R_{\ge 0}$. For any node $u$, the degree $d_u$ of $u$ is defined as 
$d_u\triangleq\sum_{u\sim v}w(u,v)$, where we write $u\sim v$ if $\{u,v\}\in E[G]$.
For any set $S \subseteq V$, 
 the volume of $S$ is defined by $\vol_G(S) \triangleq \sum_{v\in S} d_v$.
The (normalised) indicator vector of a set $S\subset V$ is defined by $\chi_S\in\mathbb{R}^n$, where   $\chi_S(v)=\sqrt{d_v/\vol(S)}$ if $v\in S$, and $\chi_S(v)=0$ otherwise.

 We work with algebraic objects related to $G$. 
 Let $\mat{A}_G$ be the adjacency matrix of $G$ defined by $(\mat{A}_G)_{u,v}=w(u,v)$ if $\{u,v\}\in E(G)$, and $(\mat{A}_G)_{u,v}=0$ otherwise. The degree matrix 
 $\mat{D}_G$ of $G$  is a diagonal matrix defined by $(\mat{D}_G)_{u,u} = d_u$, and the normalised Laplacian of $G$ is defined by $\calL_G \triangleq \Id - \mat{D}_G^{-1/2} \mat{A}_G \mat{D}_G^{-1/2}$. Alternatively, we can write the normalised Laplacian with respect to the indicator vectors of nodes: for each node $v$,  we define an indicator vector $\chi_v\in\mathbb{R}^n$ by $\chi_v(u)=1/\sqrt{d_v}$  if $u=v$, and $\chi_v(u)=0$ otherwise. We further define $b_e\triangleq \chi_u-\chi_v$ for each edge $e=\{u,v\}$, where the orientation of $e$ is chosen arbitrarily. Then, we can write 
   $\calL_G = \sum_{e=\{u,v\} \in E} w(u,v)\cdot b_e b_e^{\intercal}$. 
   We always use 
$0 = \lambda_1\leq\cdots\leq\lambda_n \le 2$
to express the eigenvalues of $\calL_G$, with 
 their corresponding orthonormal eigenvectors  $f_1,\ldots, f_n$. 
With a slight abuse of notation, we use $\calL_G^{-1}$ for the pseudoinverse of $\calL$, i.e.,  $\calL_G^{-1} \triangleq \sum_{i=2}^n \frac{1}{\lambda_i} f_i f_i^{\intercal}$. 
When $G$ is connected, it holds that 
  $\lambda_2 > 0$ and the matrix $\calL_G^{-1}$
 is well-defined.   Sometimes we drop the subscript $G$ when it is clear from the context.
  
Remember that 
the Euclidean norm of any vector $x\in\R^n$ is defined as $\|x\|\triangleq\sqrt{\sum_{i=1}^n x_i^2}$, and the spectral norm of any matrix $\mat{M} \in \R^{n \times n}$ is defined as 
\[
\|\mat{M}\| \triangleq \max_{x \in \R^n\setminus\{\mathbf{0}\}} \frac{\|\mat{M}x\|}{\|x\|}.
\] 
\section{Cluster-Preserving Sparsifiers}
\label{sec:sparse}

 In this section we present an algorithm for constructing a cluster-preserving sparsifier that can be easily implemented in the distributed setting.  Our algorithm is based on sampling edges with respect to the degrees of their endpoints, which was originally introduced in \cite{spielmanTengSS} as a way to construct spectral sparsifiers for graphs with \emph{high} spectral expansion. To sketch the intuition behind our algorithm, let us look at the following toy example illustrated in Figure~\ref{ssfig}, i.e., the graph $G$ consisting of two complete graphs of $n$ nodes connected by a single edge.
It is easy to see that, when we sample $O(n\log n)$ edges uniformly at random from $G$ to form a graph $H$,  with high probability the middle edge will not be sampled and $H$ will consists of 
 two isolated expander graphs, each of which has  constant spectral expansion. Although our sampled graph  $H$ does not preserve the  spectral and cut structure of $G$, it does preserve its cluster-structure: every reasonable clustering algorithm will recover these two disjoint components of $H$,  which correspond exactly to the two clusters in $G$. 
 We will show that this sampling scheme can be generalised, and 
sampling every edge $u\sim v$ with  probability depending only on $d_u$ and $d_v$ suffices to construct a sparse subgraph that preserves the cluster-structure of the original graph.


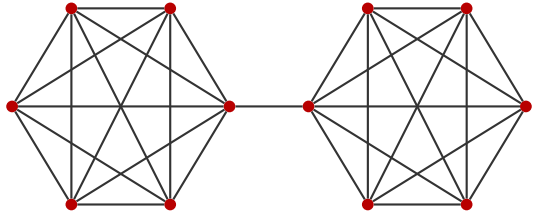
\begin{figure}[h]
\begin{center}
\begin{tikzpicture}[xscale=1.3,yscale=1.3,rounded corners=3pt,knoten/.style={fill=ocre,color=ocre,circle,scale=0.23},edge/.style={black!80, thick},tedge/.style={black!80, line width=2pt}]

\node[knoten] (1) at (0,0) {$1$};
\node[knoten] (2) at (1,0) {$2$};

\node[knoten] (3) at (-0.6,1) {$3$};
\node[knoten] (4) at (1.6,1) {$4$};

\node[knoten] (5) at (0,2) {$3$};
\node[knoten] (6) at (1,2) {$4$};

\draw[edge] (1) -- (2);
\draw[edge] (1) -- (3);
\draw[edge] (1) -- (4);
\draw[edge] (1) -- (5);
\draw[edge] (1) -- (6);

\draw[edge] (2) -- (3);
\draw[edge] (2) -- (4);
\draw[edge] (2) -- (5);
\draw[edge] (2) -- (6);

\draw[edge] (3) -- (4);
\draw[edge] (3) -- (5);
\draw[edge] (3) -- (6);

\draw[edge] (4) -- (5);
\draw[edge] (4) -- (6);

\draw[edge] (5) -- (6);

\node[knoten] (11) at (3,0) {$1$};
\node[knoten] (12) at (4,0) {$2$};

\node[knoten] (13) at (2.4,1) {$3$};
\node[knoten] (14) at (4.6,1) {$4$};

\node[knoten] (15) at (3,2) {$3$};
\node[knoten] (16) at (4,2) {$4$};

\draw[edge] (13) -- (4);

\draw[edge] (11) -- (12);
\draw[edge] (11) -- (13);
\draw[edge] (11) -- (14);
\draw[edge] (11) -- (15);
\draw[edge] (11) -- (16);

\draw[edge] (12) -- (13);
\draw[edge] (12) -- (14);
\draw[edge] (12) -- (15);
\draw[edge] (12) -- (16);

\draw[edge] (13) -- (14);
\draw[edge] (13) -- (15);
\draw[edge] (13) -- (16);

\draw[edge] (14) -- (15);
\draw[edge] (14) -- (16);

\draw[edge] (15) -- (16);

\end{tikzpicture}
\end{center}

\caption{The graph $G$ consists of two complete subgraphs of $n$ nodes connected by an edge. It is
easy to see that sampling $O(n \log n)$ edges uniformly at random suffices to construct a subgraph having the same cluster-structure of $G$.
\label{ssfig}}
\end{figure}


 \subsection{Algorithm Description}

In our algorithm  every node $u$ checks every edge $e=\{u,v\}$ adjacent to $u$ itself, and samples edge $e$ with probability 
\begin{equation}\label{eq:samprob}
p_u(v)\triangleq \min\left\{w(u,v)\cdot  \frac{\tau \; \log n}{d_u} ,1\right\} 
\end{equation}
for some parameter $\tau$ satisfying $\tau \ge C / \lambda_{k+1}$ for a large enough constant $C\in\mathbb{R}_{\geq 0}$.
The algorithm uses a set $F$ to maintain all the sampled edges, where $F$ is initially set to be empty. Finally, the algorithm returns a weighted graph $H=(V,F, w_H)$, where the weight $w_H(u,v)$ of every edge $e=\{u,v\} \in F$ is defined as
\[
w_H(u,v)\triangleq \frac{w(u,v)}{p_e},
\]
and
\[
p_e\triangleq p_u(v)+ p_v(u)-p_u(v)\cdot p_v(u)
\]
 is  the probability that $e$ is sampled by at least one of its endpoints. 
 Notice that our algorithm can be easily implemented in a distributed setting:  any node $u$ chooses to retain (or not) an edge $u\sim v$ independently from any other node, and communication between $u$ and $v$ is needed only if  $u\sim v$ is sampled by one of its two endpoints. Therefore, the total communication cost of the algorithm is proportional to the number of edges in $H$.   
 
 

\subsection{Analysis of the Algorithm} 

Now we analyse the algorithm, and prove Theorem~\ref{main11}. At a high level, our proof  consists of the following two steps:
\begin{enumerate}
\item  We analyse the intra-connectivity of the clusters in the returned graph $H$: we show that the top $n-k$ eigenspaces of $\mathcal{L}_G$ are preserved in $\calL_H$, and hence $ \lambda_{k+1}(\mathcal{L}_H) =\Theta(\lambda_{k+1}(\mathcal{L}_G))$.  
\item 
We show that the conductance of $S_1,\ldots, S_k$ are low in $H$, i.e., 
  \begin{equation}
\label{eq:lowconductance}
\phi_H(S_i) = O\left(k\cdot \phi_G(S_i)\right) \text{ for any } i=1,\dots,k.
\end{equation}
\end{enumerate}
Combining these two steps, we will prove that $\Upsilon_H(k)=\Omega(\Upsilon_G(k)/k)$, which proves the approximation guarantees of Theorem~\ref{main11}. 
The bound on the number of edges in   $H$ follows from  the definition of the sampling scheme of our algorithm. 

The following  concentration inequalities  will be used in our proof.

\begin{lemma}[Problem 1.9, \cite{DP09}]\label{lem:chernoff1}
Let $X_1,\ldots, X_n$ be independent random variables such that for each $i\in [n]$, $x_i\in[a_i,b_i]$ for some reals $a_i$ and $b_i$. Then it holds that 
\[
\Pro{|X- \mathbf{E}[X]|\geq t} \leq 2\mathrm{exp} \left(-\frac{2t^2}{\sum_{i} (b_i - a_i)^2} \right).
\]
\end{lemma}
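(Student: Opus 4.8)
This is the classical Hoeffding bound, so the plan is to run the exponential-moment (Chernoff) method. First I would reduce to a one-sided estimate: it suffices to show $\Pro{X-\mathbf{E}[X]\geq t}\leq \exp\!\big({-2t^2}/{\sum_i(b_i-a_i)^2}\big)$, since applying the same estimate to the variables $-X_i$ controls the lower tail and a union bound over the two events produces the factor $2$. Write $Y_i\triangleq X_i-\mathbf{E}[X_i]$, so the $Y_i$ are independent, mean-zero, and each $Y_i$ ranges over an interval of length $b_i-a_i$. For any $s>0$, Markov's inequality applied to $e^{sY}$ with $Y\triangleq\sum_i Y_i$ gives $\Pro{Y\geq t}\leq e^{-st}\,\mathbf{E}[e^{sY}]=e^{-st}\prod_i\mathbf{E}[e^{sY_i}]$, the last equality by independence.

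The heart of the argument is \emph{Hoeffding's lemma}: if $Z$ is mean-zero and takes values in $[a,b]$, then $\mathbf{E}[e^{sZ}]\leq\exp\!\big(s^2(b-a)^2/8\big)$ for every $s\in\R$. I would prove this by analysing $\psi(s)\triangleq\log\mathbf{E}[e^{sZ}]$. One checks $\psi(0)=0$ and $\psi'(0)=\mathbf{E}[Z]=0$, and that $\psi''(s)$ equals the variance of $Z$ under the exponentially tilted law $\mathrm{d}\mu_s\propto e^{sz}\,\mathrm{d}\mu$; since this tilted law is again supported in $[a,b]$, its variance is at most $(b-a)^2/4$, the maximal variance of a $[a,b]$-valued random variable. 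Taylor's theorem with remainder then yields $\psi(s)\leq s^2(b-a)^2/8$. (An alternative avoiding the tilted-measure computation: convexity of $z\mapsto e^{sz}$ on $[a,b]$ gives $e^{sz}\leq\frac{b-z}{b-a}e^{sa}+\frac{z-a}{b-a}e^{sb}$; take expectations, note the linear term vanishes since $\mathbf{E}[Z]=0$, and bound the logarithm of the resulting function of $s$ by $s^2(b-a)^2/8$ by a one-variable calculus estimate.)

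Substituting Hoeffding's lemma into the product bound gives $\Pro{Y\geq t}\leq\exp\!\big({-st}+\tfrac{s^2}{8}\sum_i(b_i-a_i)^2\big)$, and I would finish by optimising the exponent over $s>0$: the choice $s=4t/\sum_i(b_i-a_i)^2$ makes the exponent equal to $-2t^2/\sum_i(b_i-a_i)^2$, which is exactly the claimed one-sided bound. I expect the only genuinely delicate point to be the second-derivative estimate $\psi''(s)\leq(b-a)^2/4$ (equivalently, the convexity step in the alternative route); the reduction to the one-sided tail, the use of independence, and the optimisation over $s$ are all routine bookkeeping.
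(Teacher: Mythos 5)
Your proposal is correct: it is the standard Chernoff/exponential-moment proof of Hoeffding's inequality, with Hoeffding's lemma as the key ingredient, the symmetrisation to $-X_i$ giving the factor $2$, and the optimisation $s=4t/\sum_i(b_i-a_i)^2$ yielding exactly the stated exponent. There is nothing in the paper to compare it against — the lemma is imported verbatim as Problem 1.9 of \cite{DP09} and never proved there — so your argument simply supplies the textbook proof that the citation points to, and it is sound as written.
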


\begin{lemma}[Matrix Chernoff Bound, \cite{tropp}]
\label{lem:chernoff}
Consider a finite sequence $\{X_i\}$ of independent, random, PSD matrices of dimension $d$ that satisfy $\|X_i\| \le R$. Let $\mu_{\min} \triangleq \lambda_{\min}\left(\Ex{\sum_i X_i}\right)$ and $\mu_{\max} \triangleq \lambda_{\max}\left(\Ex{\sum_i X_i}\right)$. Then it holds that 
\begin{align*}
\Pro{\lambda_{\min}\left({\sum_i X_i}\right) \le (1-\delta)\mu_{\min}} &\le d \cdot \left(\frac{\mathrm{e}^{-\delta}}{(1-\delta)^{1-\delta}}\right)^{\mu_{\min}/R} \text{  for } \delta \in [0,1], \text{ and} \\
\Pro{\lambda_{\max}\left({\sum_i X_i}\right) \ge (1 +\delta)\mu_{\max}} &\le d \cdot \left(\frac{\mathrm{e}^{\delta}}{(1+\delta)^{1+\delta}}\right)^{\mu_{\max}/R} \text{  for } \delta \ge 0.
\end{align*}
\end{lemma}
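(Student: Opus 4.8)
The plan is to prove both tail bounds by the standard matrix Laplace-transform method of Ahlswede--Winter and Tropp; I describe the upper tail in detail, the lower tail being entirely symmetric. Write $Y \triangleq \sum_i X_i$ and $M \triangleq \Ex{Y}$, so that $M$ is positive semidefinite with smallest eigenvalue $\mu_{\min}$ and largest eigenvalue $\mu_{\max}$, and fix a parameter $\theta > 0$ to be optimised later. Since $\mathrm{e}^{\theta Y}$ is positive semidefinite, its top eigenvalue is at most its trace, so Markov's inequality applied to $\mathrm{e}^{\theta \lambda_{\max}(Y)}$ gives the starting point
\[
\Pro{\lambda_{\max}(Y) \ge t} \le \mathrm{e}^{-\theta t}\,\Ex{\mathrm{e}^{\theta\lambda_{\max}(Y)}} = \mathrm{e}^{-\theta t}\,\Ex{\lambda_{\max}(\mathrm{e}^{\theta Y})} \le \mathrm{e}^{-\theta t}\,\Ex{\tr(\mathrm{e}^{\theta Y})}.
\]

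The heart of the argument is the subadditivity bound for the matrix cumulant generating function,
\[
\Ex{\tr(\mathrm{e}^{\theta Y})} \le \tr\exp\!\left(\sum_i \log\Ex{\mathrm{e}^{\theta X_i}}\right),
\]
which one obtains by peeling off the summands one at a time, at each step conditioning on all but one $X_i$ and invoking Lieb's concavity theorem (the map $A \mapsto \tr\exp(H + \log A)$ is concave on the positive-definite cone) together with Jensen's inequality and the independence of the $X_i$. To control each term I use that $0 \preceq X_i \preceq R\,\Id$: convexity of $x \mapsto \mathrm{e}^{\theta x}$ yields $\mathrm{e}^{\theta x} \le 1 + \frac{\mathrm{e}^{\theta R}-1}{R}\,x$ on $[0,R]$, hence $\mathrm{e}^{\theta X_i} \preceq \Id + \frac{\mathrm{e}^{\theta R}-1}{R}\,X_i$ by the eigenvalue transfer rule; taking expectations and applying $\log(1+x) \le x$ together with operator monotonicity of $\log$ gives $\log\Ex{\mathrm{e}^{\theta X_i}} \preceq \frac{\mathrm{e}^{\theta R}-1}{R}\,\Ex{X_i}$. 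Summing over $i$ yields $\sum_i \log\Ex{\mathrm{e}^{\theta X_i}} \preceq g(\theta)\,M$, where $g(\theta) \triangleq (\mathrm{e}^{\theta R}-1)/R$.

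Since $\tr\exp(\cdot)$ is monotone with respect to $\preceq$ and the eigenvalues of $g(\theta)M$ lie in $[0, g(\theta)\mu_{\max}]$, we get $\Ex{\tr(\mathrm{e}^{\theta Y})} \le \tr\exp(g(\theta)M) \le d\cdot\mathrm{e}^{g(\theta)\mu_{\max}}$, and therefore
\[
\Pro{\lambda_{\max}(Y) \ge t} \le d\cdot\exp\!\left(-\theta t + g(\theta)\,\mu_{\max}\right).
\]
Setting $t = (1+\delta)\mu_{\max}$ and choosing $\theta = \tfrac{1}{R}\log(1+\delta)$, which is legitimate as $\delta \ge 0$, collapses the exponent to $\tfrac{\mu_{\max}}{R}\bigl(\delta - (1+\delta)\log(1+\delta)\bigr) = \log\bigl((\mathrm{e}^{\delta}/(1+\delta)^{1+\delta})^{\mu_{\max}/R}\bigr)$, which is the claimed upper-tail bound. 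For the lower tail one runs the same computation with $-\theta$ in place of $\theta$: now $\mathrm{e}^{-\theta x} \le 1 - \tfrac{1-\mathrm{e}^{-\theta R}}{R}x$ on $[0,R]$ gives $\sum_i\log\Ex{\mathrm{e}^{-\theta X_i}} \preceq -\tfrac{1-\mathrm{e}^{-\theta R}}{R}M$, a negative semidefinite matrix whose exponential is therefore controlled by $\mu_{\min}$ rather than $\mu_{\max}$, and optimising at $\theta = -\tfrac{1}{R}\log(1-\delta)$ for $\delta \in [0,1]$ yields the first inequality.

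The only non-elementary ingredient is the cumulant subadditivity bound, which rests on Lieb's concavity theorem; its proof is substantial and I would simply cite it, as the paper already does in pointing to \cite{tropp}. The remaining pieces — the matrix Markov inequality, the scalar-to-matrix transfer of the exponential estimate via $0 \preceq X_i \preceq R\,\Id$, and the one-variable optimisation over $\theta$ — are routine, so the main obstacle to a fully self-contained proof is precisely this appeal to Lieb.
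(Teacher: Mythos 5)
The paper does not prove this lemma at all---it is quoted verbatim from Tropp's matrix concentration work and used as a black box---so there is no internal proof to compare against. Your outline is a correct and faithful reconstruction of the standard argument behind the cited result (matrix Laplace transform, subadditivity of the matrix cumulant generating function via Lieb's concavity theorem, the scalar-to-matrix transfer $\mathrm{e}^{\theta X_i} \preceq \Id + \frac{\mathrm{e}^{\theta R}-1}{R}X_i$, and the usual optimisation over $\theta$), with the one genuinely non-elementary ingredient, Lieb's theorem, appropriately left as a citation.
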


\begin{proof}[Proof of Theorem~\ref{main11}] Without loss of generality we assume within the proof   that it holds for  any edge $u\sim v$ that
\[
w(u,v) \cdot \frac{\tau \cdot \log{n}}{ d_u} < 1.\]
Otherwise, edge $u\sim v$ is always added into $H$ by Algorithm~1, and this will not affect our analysis, as there are $O(n\tau\log n)$ such edges.

 Let $\overline{\calL}_G$ be the projection of $\calL_G$ on its top $n-k$ eigenspaces, i.e., \[
\overline{\calL}_G = \sum_{i=k+1}^n \lambda_i f_i f_i^{\intercal}.
\]
 With a slight abuse of notation we call $\overline{\calL}_{G}^{-1/2}$ the square root of the pseudoinverse of $\overline{\calL}_{G}$, i.e., 
 $
 \overline{\calL}_{G}^{-1/2} =  \sum_{i=k+1}^n (\lambda_i)^{-1/2} f_i f_i^{\intercal}$.
Analogously, we call $\overline{\mathcal{I}}$ the projection on $\Span\{f_{k+1},\dots,f_n\}$, i.e.,
 \[
 \overline{\mathcal{I}}\triangleq \sum_{i=k+1}^n f_i f_i^{\intercal}.
 \]

We will first prove that the top $n-k$ eigenspaces of $\calL_G$ are preserved, which implies that $\lambda_{k+1}(\mathcal{L}_G)=\Theta(\lambda_{k+1}(\mathcal{L}_H))$. 
To prove this statement,  we examine the properties of the graph $H$ constructed by the algorithm.  Remember that for any $e=\{u,v\}$ we have that 
\[
p_e=p_u(v) + p_v(u)- p_u(v)\cdot p_v(u),
\]
and it holds that 
 $\frac{1}{2}(p_u(v)+p_v(u)) \le p_e \le p_u(v) + p_v(u)$. Now fo each edge $e=\{ u,v\}$ of $G$ we define a random matrix $X_e\in\mathbb{R}^{n\times n}$ by 
 \[
X_e = 
\begin{cases}
  w_H(u,v)\cdot  \overline{\calL}_{G}^{-1/2} b_{e} b_{e}^{\intercal} \overline{\calL}_{G}^{-1/2} & \text{if\ } u\sim v \text{\ is sampled by the algorithm}, \\
  \mat{0}          & \text{otherwise.}\ 
\end{cases}
\]
Notice that
\[
\sum_{e\in E[G]} X_e = \sum_{ \mathrm{sampled\ edges\ } e=\{u,v\}} w_H(u,v)\cdot  \overline{\calL}_{G}^{-1/2} b_{e} b_{e}^{\intercal} \overline{\calL}_{G}^{-1/2}= \overline{\calL}_{G}^{-1/2} \calL_H \overline{\calL}_{G}^{-1/2},
\]
and 
\[
\Ex{\sum_{e \in E} X_e} = \sum_{e=\{u,v\} \in E[G]} p(e)\cdot w_H(u,v)
\cdot  \overline{\calL}_{G}^{-1/2} b_{e} b_{e}^{\intercal} \overline{\calL}_{G}^{-1/2} = \overline{\calL}_G^{-1/2} \calL_G \overline{\calL}_G^{-1/2} = \overline{\mathcal{I}}.\] 
Moreover, for any sampled $e =\{u,v\} \in E$ we have that 
\begin{align*}
\|X_e\| 	&\leq   w_H(u,v)\cdot b_{e}^{\intercal} \overline{\calL}_{G}^{-1/2} \overline{\calL}_{G}^{-1/2} b_{e} =  
		\frac{w(u,v)}{p_{e}}\cdot b_{e}^{\intercal} \overline{\calL}_{G}^{-1}  b_{e} \le \frac{w(u,v)}{p_e} \left\|\overline{\calL}_{G}^{-1}\right\| \|b_e\|^2  \\
		&\le \frac{2}{\tau  \log{n}\cdot\left(\frac{1}{d_u}+\frac{1}{d_v}\right)} \cdot \frac{1}{\lambda_{k+1}}  \left(\frac{1}{d_u}+\frac{1}{d_v}\right) \le \frac{2}{C \log{n}},
\end{align*}
where the last inequality follows by the fact that $\tau\geq C/\lambda_{k+1}$.
To apply the matrix Chernoff bound, since we work on the top $(n-k)$ eigenspaces, we can assume for simplicity that $\mu_{\min}=1$. Therefore, by setting $R =  \frac{2}{C \log{n}}$, $\delta = 1/2$,  $\mu_{\min} = \mu_{\max} = 1$, the matrix Chernoff bound~(\lemref{chernoff}) gives us that 
\[
\mathbf{P}\left[\lambda_{\min}\left(\sum_{e\in E[G]} X_e \right) \geq 1/2 \right] =1-O(1/n^c),
\]
and 
\[
\mathbf{P}\left[\lambda_{\max}\left(\sum_{e\in E[G]} X_e \right) \leq 3/2 \right] =1-O(1/n^c),
\]
for some constant $c$.  Combining this with the Courant-Fischer theorem and $$\Dim(\Span\{f_{k+1},\dots,f_n\}) = n-k,$$ we have that 
$
\lambda_{k+1}(\mathcal{L}) =\Theta(\lambda_{k+1}(\mathcal{L}_G))$.


Now we analyse the conductance of every cluster $S_i$ in $H$. For any edge $e=\{u,v\}$ we define a random variable such that \[
Y_e = 
\begin{cases}
  \frac{w(u,v)}{p_{e}}  & \text{with probability}\ p_{e}, \\
  0          & \text{otherwise.}\ 
\end{cases}
\]
Hence, it holds for any $1\leq i\leq k$ that 
\[
\Ex{w_H(S_i, V\setminus S_i)}=\Ex{\sum_{\substack{e=\{u,v\}, \\ u\in S_i, v\not\in S_i}} Y_{\{u,v\}}}=w(S_i, V\setminus S_i).
\]
 Hence, by Markov's inequality and the union bound, with constant probability it holds for all $i=1,\ldots, k$ that  
 \begin{equation}\label{eq:upcond}
 w_H(S_i, V\setminus S_i) = O\left(k \cdot  \delta_G(S_i) \right).
 \end{equation}

We further analyse $\vol_H(S_i)$. For any  $u\in S_i$, let 
\[
d^{\star}_G(u) = \sum_{\substack{ u\sim v, v\in S_i\\  d_u\leq d_v}} w(u,v),
\]
and 
\[
\vol^{\star}_G(S_i) = \sum_{u\in S_i} d^{\star}_G(u).
\]
Since $w(u,v)$ for any  internal edge $u\sim v$ in $S_i$  contributes only twice to $\vol_G(S_i)$ and $\phi_G(S_i) \le 1/2 $, it 
always
holds that $\vol^{\star}_G(S_i) \geq \vol_G(S_i)/4$.   
We define random variables 
\[
d^{\star}_H(u) =  \sum_{\substack{ u\sim v, v\in S_i\\  d_u\leq d_v}} Y_e,
\]
and
\[
\vol^{\star}_H(S_i) = \sum_{u\in S_i} d^{\star}_H(u).
\]
By definition, we have that $\Ex{\vol^{\star}_H(S_i)} = \vol^{\star}_G(S_i)$.
Since it holds for any $e=\{u,v\}$ that
\[
0 \le \frac{w(u,v)}{p_e} \leq \frac{2 w(u,v)}{p_u+ p_v} = \frac{2}{\tau\log n \cdot \left( 1/d_u + 1/d_v\right)},
\]
 we can  apply \lemref{chernoff1} and obtain that
\begin{align*}
\lefteqn{\Pro{ \left|\vol^{\star}_H(S_i) - \vol^{\star}_G(S_i)\right| \geq 1/2\cdot \vol^{\star}_G(S_i)}}\\
& = 2\cdot\mathrm{exp} \left( - \frac{1/2\cdot (\vol_G^{\star}(S_i))^2}{ \sum_{u\in S_i} \sum_{u\sim v, v\in S_i, d_u\leq d_v} \left(\frac{2}{\tau\log n \cdot \left( 1/d_u + 1/d_v\right)}\right)^2} \right) \\
& \leq 2\cdot\mathrm{exp} \left( - \frac{\tau^2\log^2n\cdot (\sum_{u\in S_i} d_u)^2/100 }{ \sum_{u\in S_i} \sum_{u\sim v, v\in S_i, d_u\leq d_v } \left(\frac{d_ud_v}{ d_u + d_v}\right)^2} \right)\\
& \leq  2\cdot\mathrm{exp} \left( - \frac{\tau^2\log^2n\cdot (\sum_{u\in S_i} d_u)^2/100 }{ \sum_{u\in S_i} \sum_{u\sim v, v\in S_i, d_u\leq d_v } d_ud_v} \right)\\
& = O(1/n^c)
\end{align*}
for some constant $c\geq 10$. Hence, with probability $1-O(1/n^c)$ it holds that \
\begin{equation}\label{eq:lbvol}
\vol_H^{\star}(S_i)\geq \frac{1}{2}\cdot \vol_G^{\star}(S_i)\geq \frac{1}{8}\cdot\vol_G(S_i).
\end{equation}
By the union bound, \eq{lbvol} holds for all the $k$ clusters. Combining this with \eq{upcond} shows that $\phi_H(S_i) = O(k\cdot \phi_G(S_i))$ for any $1\leq i\leq k$, and 
$\Upsilon_H(k)=\Omega(\Upsilon_G(k)/k)$. 
 \end{proof}

\section{Distributed Graph Clustering}
\label{sec:cluster}

In this section we present and analyse a distributed algorithm to partition a graph $G$ that possesses a cluster-structure with clusters of balanced size, and prove \thmref{cluster}.

 \subsection{Algorithm}

Our algorithm consists of  \textsc{Seeding}, \textsc{Averaging}, and \textsc{Query} steps, which are described as follows.
  
\textbf{The \textsc{Seeding} step:}  The algorithm sets \[
\bar{s}=\Theta\left(\frac{1}{\beta}\cdot \log \frac{1}{\beta}\right),
\]
 and each node $v$ chooses to be \emph{active} with probability $\bar{s}\cdot d_v/\vol(V)$.  For simplicity, we assume that $v_1,\cdots, v_s$ are the active nodes, for some $s\in\mathbb{N}$. 
The algorithm associates each active node with a vector $x^{(0,i)}=\chi_{v_i}$, and these 
 vectors $x^{(0,1)},\ldots, x^{(0,s)}$ represent the initial state~(round $0$) of the graph, where each node $v$ only maintains the values $x^{(0,1)}(v),\ldots, x^{(0,s)}(v)$. Notice that the information about which nodes are  active  doesn't need to be broadcasted during the seeding step of the algorithm.

\textbf{The \textsc{Averaging} step:} This step consists of $T$ rounds, and in each round every node $v$ updates its state based on the states of its neighbours from the previous round. Namely, for any $1\leq i\leq s$, the values $x^{(t,i)}(v)$ maintained by node $v$ in round $t$ are computed according to
\begin{equation}
\label{eq:update}
x^{(t,i)}(v) = \frac{1}{2} x^{(t-1,i)}(v) +  \frac{1}{2} \sum_{\{u,v\} \in E} \frac{w(u,v)}{\sqrt{d_u d_v}} x^{(t-1,i)}(u).
\end{equation}


\textbf{The \textsc{Query} step:} Every node $v$ computes the label $\ell_v$ of the cluster that it belongs to by the formula 
\begin{equation}\label{eq:query}
\ell_v = \min\left\{i \, \mid \, x^{(t,i)}(v) \ge  {\frac{\sqrt{d_v}}{{2} \beta \vol(V)}}\right\}.
\end{equation}


Notice that the execution of the algorithm requires each node to know certain parameters about the graph, including the number of nodes $n$, the  volume of the graph $\vol(V)$, a bound $\beta$ on the size of the clusters, and the value of $T$. 
However,  nodes  do not need to know the exact values of these parameters, but only a reasonable approximation. Moreover,  although the value of  $T$ is application-dependent, for graphs with  clusters that have  strong intra-connectivity properties,  we can set $T \approx \log{n}$ in practice.  
 
\subsection{Analysis of the Algorithm}



In this section we analyse the distributed clustering algorithm, and prove Theorem~\ref{thm:cluster}. Remember that we assume that $G$ has an optimal clustering $S_1,\ldots, S_k$ with $\vol(S_i) \ge \beta \vol(V)$ for any $1\le i \le k$, and $G$ satisfies the following gap assumption:
\begin{equation}
\label{eq:gap}
\Upsilon_G(k)= \omega\left( k^4 \log^2\frac{1}{\beta} + \log{n} \right).
\end{equation}

Before analysing the algorithm, we first discuss some intuitions behind the proof. Remember that 
the configuration of the network in  round $t$ of the averaging step is expressed by $s$ vectors $x^{(t,1)},\ldots, x^{(t,s)}$, and these vectors are updated according to \eq{update}.  For the sake of intuition, we assume that $G$ is regular, and then 
 the vector $x^{(t,i)}$ corresponds to the probability distribution of a  $t$-step lazy random walk in $G$. It is well-known that the vector
 $x^{(t,i)}$ converges to the uniform distribution as $t$ tends to infinity. The time $T=\Theta(\log{n}/\lambda_{k+1})$, instead, corresponds to the \emph{local mixing time} of the clusters: if a random walk starts with $v_i \in S_i$, then the probability distribution of this $T$-step random walk
 will be mixed (uniform) inside $S_i$, conditioned on the fact that the random walk never leaves that cluster.  Our analysis shows that, when picking $v_i$ at random from $S_i$, with high probability the distribution of the random walk after $T$ steps is concentrated on $S_i$.  In other words, after $T$ rounds, each vector $x^{(T,1)},\ldots, x^{(T,s)}$ is almost uniform on one of the clusters, and close to zero everywhere else. Hence, as long as we  hit all the clusters with at least one initial active node, the query step will assign the same label to two nodes if and only if they belong to the same cluster (for most pairs of nodes).

When $G$ is not regular, \eq{update} suggests that the averaging step can be thought as a \emph{power iteration method} to approximate ($k$ linearly independent combination of) the bottom eigenvectors of $\calL_G$. We will show that these eigenvectors contain all the information needed to obtain a good partitioning of the graph.

To formalise the intuitions before, similar with the definition of $\overline{\mathcal{I}}$, let
\[
\underline{\mathcal{I}} = \sum_{i=1}^k f_i f_i^{\intercal}
\]
 be the projection on the bottom $k$ eigenspaces. We first prove that, starting the process with a single initial vector $x^{(0)}$,  $x^{(T)}$ is close to $\underline{\mathcal{I}} x^{(0)}$. 

\begin{lemma}
\label{lem:approx}
For a large constant $c>0$, it holds
 \[
 \left\| x^{(T)}- \underline{\mathcal{I}} x^{(0)} \right\| = O\left(\frac{\log{n} }{\Upsilon_G(k)}\cdot \left\|\underline{\mathcal{I}}  x^{(0)}\right\| + n^{-c}\right).
 \]
\end{lemma}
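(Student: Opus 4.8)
The plan is to recognise the averaging step as a matrix power and then diagonalise it in the eigenbasis of $\calL_G$. First I would rewrite the update rule \eqref{eq:update} in matrix form: since $\tfrac12\left(\Id + \mat{D}_G^{-1/2}\mat{A}_G\mat{D}_G^{-1/2}\right) = \Id - \tfrac12\calL_G$, each round applies the operator $\mat{M} \triangleq \Id - \tfrac12\calL_G$, so after $T$ rounds $x^{(T)} = \mat{M}^T x^{(0)}$. Writing $x^{(0)} = \sum_{i=1}^n \alpha_i f_i$ in the orthonormal eigenbasis, the operator $\mat{M}^T$ acts on $f_i$ with eigenvalue $\left(1 - \lambda_i/2\right)^T \in [0,1]$ (using $\lambda_i \in [0,2]$), hence
\[
x^{(T)} - \underline{\mathcal{I}} x^{(0)} = \sum_{i=1}^k \left(\left(1 - \tfrac{\lambda_i}{2}\right)^T - 1\right)\alpha_i f_i \;+\; \sum_{i=k+1}^n \left(1 - \tfrac{\lambda_i}{2}\right)^T \alpha_i f_i .
\]
By orthonormality of the $f_i$, it suffices to bound the Euclidean norm of each of these two orthogonal pieces separately and add them.

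For the first (bottom) piece I would use the elementary inequality $1 - (1-x)^T \le Tx$ for $x\in[0,1]$ (Bernoulli), so each coefficient has absolute value at most $T\lambda_i/2 \le T\lambda_k/2$. Invoking the lower-order Cheeger bound $\lambda_k \le 2\rho(k)$ gives a coefficient bound of $T\rho(k)$, and since the vectors involved span only $\Span\{f_1,\dots,f_k\}$, the norm of this piece is at most $T\rho(k)\cdot\|\underline{\mathcal{I}} x^{(0)}\|$. Plugging in $T = \Theta(\log n/\lambda_{k+1})$ and $\rho(k) = \lambda_{k+1}/\Upsilon_G(k)$ makes this $O\!\left(\tfrac{\log n}{\Upsilon_G(k)}\right)\cdot\|\underline{\mathcal{I}} x^{(0)}\|$, exactly the first term of the claim.

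For the second (top) piece, every coefficient $\left(1-\lambda_i/2\right)^T$ with $i\ge k+1$ is at most $\left(1-\lambda_{k+1}/2\right)^T \le \mathrm{e}^{-\lambda_{k+1}T/2}$; since $x^{(0)}$ is the normalised indicator $\chi_{v}$ of a single node we have $\|x^{(0)}\|\le 1$, so the norm of this piece is at most $\mathrm{e}^{-\lambda_{k+1}T/2}$. Choosing the hidden constant in $T = \Theta(\log n/\lambda_{k+1})$ large enough (depending on $c$) makes this at most $n^{-c}$. A triangle inequality on the two bounds then yields the lemma.

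The main obstacle is not conceptual — this is a spectral-decay estimate — but lies in the constant bookkeeping: one must use $\lambda_k \le 2\rho(k)$ rather than the trivial $\lambda_k\le 2$ to control the bottom eigenvalues, substitute $\rho(k)=\lambda_{k+1}/\Upsilon_G(k)$ correctly, and calibrate the constant in $T$ so the truncated tail is genuinely polynomially small. A minor point to address is the bound $\|x^{(0)}\| \le 1$, which holds because $x^{(0)}=\chi_{v}$ and $\|\chi_v\| = 1/\sqrt{d_v}\le 1$.
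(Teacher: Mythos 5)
Your proposal is correct and follows essentially the same route as the paper's proof: write the update as powers of $\Id - \tfrac12\calL_G$, split in the eigenbasis into the bottom-$k$ part (bounded via Bernoulli's inequality, $\lambda_k \le 2\rho(k)$ from the higher-order Cheeger inequality, and the definition of $\Upsilon_G(k)$) and the top part (geometric decay below $n^{-c}$ after calibrating the constant in $T$). The only cosmetic difference is that you bound the two orthogonal pieces separately and add them, whereas the paper computes the squared norm in one sum and takes a square root.
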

\begin{proof}

By the update rule of the algorithm, we have that $x^{(T)} = \mat{P}^T x^{(0)}$, where 
\[
\mat{P} = \frac{1}{2}\cdot \mat{I} + \frac{1}{2}\cdot\mat{D}^{-1/2} \mat{A}\mat{D}^{-1/2} = \mat{I} - \frac{1}{2}\mathcal{L}_G.
\]
Hence, it holds that 
\[
\mat{P}^T = \left(\Id - \frac{1}{2}\calL_G\right)^T = \sum_{i=1}^n \left(1 - \frac{\lambda_i}{2}\right)^T f_i f_i^{\intercal}.
\]
Notice that for $i > k$ it holds that 
\[
\left(1 - \frac{\lambda_i}{2}\right)^T \le \left(1 -\frac{ \lambda_{k+1}}{2}\right)^{2c \cdot \log{n}/\lambda_{k+1}} \le \mathrm{e}^{-c \log{n}} \le n^{-c},
\]
while for $i=1,\dots,k$ it holds that 
\[
\left(1 - \frac{\lambda_i}{2}\right)^T \ge 1 - \frac{T\cdot \lambda_i}{2}
 \ge 1 - O\left(\frac{\log{n} \cdot  \lambda_i}{\lambda_{k+1}}\right) \ge 1 - O\left(\frac{\log{n} \cdot  \lambda_k}{\lambda_{k+1}}\right).
\]
Hence,
\begin{align*}
\left\| x^{(T)} - \underline{\mathcal{I}} x^{(0)} \right\|^2 &= \left\|\left( \mat{P}^T - \underline{\mathcal{I}}\right) x^{(0)}\right\|^2 \\
						&= \sum_{i=1}^k \left(1 - \left(1 - \frac{\lambda_i}{2}\right)^{T}\right)^2   \left\langle f_i, x^{(0)} \right\rangle^2  
								+  \sum_{i=k+1}^n \left(1 - \frac{\lambda_i}{2}\right)^{2T}  \left\langle f_i, x^{(0)} \right\rangle^2\\
			&=	O\left(\left(\frac{\log{n} \cdot  \lambda_k}{\lambda_{k+1}}\right)^2 \sum_{i=1}^k  \left\langle f_i, x^{(0)} \right\rangle^2 + n^{-2c} \right)\\
				&= O\left(\left(\frac{\log{n} \cdot  \lambda_k}{\lambda_{k+1}}\left\|\underline{\mathcal{I}}x^{(0)}\right\|\right)^2 + n^{-2c}\right)\\
				&= O\left(\left(\frac{\log{n} }{\Upsilon_G(k)}\left\|\underline{\mathcal{I}}x^{(0)}\right\|\right)^2 + n^{-2c}\right),
\end{align*}
where the last inequality follows from the higher-order Cheeger inequality. Then, taking the square root on both sides of the equality above proves the lemma.
\end{proof}


As the goal is to use $x^{(T)}$ to recover the clusters, we need to relate $\underline{\mathcal{I}} $ to their indicator vectors. The following result proves that the bottom $k$ eigenvectors of $\mathcal{L}_G$ are close to a linear combination of the indicator vectors of  $S_1,\ldots, S_k$.

\begin{lemma}
\label{lem:structure}
Let  $\Upsilon_G(k)=\Omega\left(k^2\right)$. For any $1\leq i\leq k$ there exists $\widehat{\chi}_i \in \Span\{\chi_{S_1},\dots,\chi_{S_k}\}$ such that
$
\left\|\widehat{\chi}_i - f_i\right\| = O\left(k \sqrt{\frac{k}{\Upsilon_G(k)}}\right)$.
Moreover $\{\widehat{\chi}_i\}_{i=1}^k$ form an orthonormal set.
\end{lemma}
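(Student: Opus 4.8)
I would proceed in three steps: first show each cluster indicator $\chi_{S_i}$ is almost contained in $\Span\{f_1,\dots,f_k\}$; then invert this to show each $f_i$ is almost contained in $W\triangleq\Span\{\chi_{S_1},\dots,\chi_{S_k}\}$; finally orthonormalise the orthogonal projections of the $f_i$ onto $W$ to obtain the claimed $\widehat{\chi}_i$. For the first step, using $\calL_G=\sum_{e=\{u,v\}}w(u,v)\,b_eb_e^{\intercal}$ together with the definition of $\chi_{S_i}$ one checks $\|\chi_{S_i}\|=1$, $\langle\chi_{S_i},\chi_{S_j}\rangle=0$ for $i\neq j$, and $\chi_{S_i}^{\intercal}\calL_G\chi_{S_i}=\phi_G(S_i)\le\rho(k)$; expanding $\chi_{S_i}=\sum_{\ell}\alpha_{i\ell}f_\ell$ and using $\chi_{S_i}^{\intercal}\calL_G\chi_{S_i}=\sum_\ell\lambda_\ell\alpha_{i\ell}^2\ge\lambda_{k+1}\sum_{\ell>k}\alpha_{i\ell}^2$ then gives $\|(\Id-\underline{\mathcal{I}})\chi_{S_i}\|^2=\sum_{\ell>k}\alpha_{i\ell}^2\le\rho(k)/\lambda_{k+1}=1/\Upsilon_G(k)$. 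Hence $\bar\chi_i\triangleq\underline{\mathcal{I}}\chi_{S_i}\in\Span\{f_1,\dots,f_k\}$ satisfies $\|\chi_{S_i}-\bar\chi_i\|\le\Upsilon_G(k)^{-1/2}$ and $\|\bar\chi_i\|^2\ge1-\Upsilon_G(k)^{-1}$.

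For the second step, I would note that, since $\underline{\mathcal{I}}$ is a projection, $\langle\bar\chi_i,\bar\chi_j\rangle=\langle(\Id-\underline{\mathcal{I}})\chi_{S_i},(\Id-\underline{\mathcal{I}})\chi_{S_j}\rangle$ for $i\neq j$, so, using also $\|\bar\chi_i\|^2\ge1-1/\Upsilon_G(k)$, the Gram matrix of $\{\bar\chi_i\}$ differs from the identity by at most $1/\Upsilon_G(k)$ in every entry; under $\Upsilon_G(k)=\Omega(k^2)$ with a large enough constant its least eigenvalue is at least $1/2$, hence $\{\bar\chi_i\}$ is linearly independent and is a basis of $\Span\{f_1,\dots,f_k\}$. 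Writing $f_i=\sum_j\gamma_{ij}\bar\chi_j$, the same conditioning forces $\sum_j\gamma_{ij}^2=O(1)$, hence $\sum_j|\gamma_{ij}|=O(\sqrt k)$. Letting $P_W$ denote the orthogonal projection onto $W$ and using $(\Id-P_W)\bar\chi_j=(\Id-P_W)(\bar\chi_j-\chi_{S_j})$ (as $\chi_{S_j}\in W$), I would conclude $\|(\Id-P_W)f_i\|\le\sum_j|\gamma_{ij}|\,\|\bar\chi_j-\chi_{S_j}\|=O(\sqrt{k/\Upsilon_G(k)})$, so that $\widetilde\chi_i\triangleq P_Wf_i\in W$ obeys $\|\widetilde\chi_i-f_i\|=O(\sqrt{k/\Upsilon_G(k)})$.

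For the third step, the key point is that, since $\widetilde\chi_i$ is the \emph{exact} projection of $f_i$ onto $W$, the Gram matrix $\widetilde M$ of $\{\widetilde\chi_i\}$ has entries $\widetilde M_{ij}-\delta_{ij}=-\langle(\Id-P_W)f_i,(\Id-P_W)f_j\rangle$ of magnitude $O(k/\Upsilon_G(k))$ --- a product of two small quantities --- whence $\|\widetilde M-\Id\|=O(k^2/\Upsilon_G(k))\le1/2$ under the hypothesis. I would then set $\widehat{\chi}_i\triangleq\sum_j(\widetilde M^{-1/2})_{ij}\widetilde\chi_j$, which is orthonormal (since $\langle\widehat{\chi}_i,\widehat{\chi}_j\rangle=(\widetilde M^{-1/2}\widetilde M\widetilde M^{-1/2})_{ij}=\delta_{ij}$) and still lies in $W=\Span\{\chi_{S_1},\dots,\chi_{S_k}\}$; the standard estimates $\|\sum_jc_j\widetilde\chi_j\|^2=\mathbf c^{\intercal}\widetilde M\mathbf c$ and $\|\widetilde M^{-1/2}-\Id\|=O(\|\widetilde M-\Id\|)$ yield $\|\widehat{\chi}_i-\widetilde\chi_i\|=O(k^2/\Upsilon_G(k))$. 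Combining this with the second step, together with $k^2/\Upsilon_G(k)=O(k\sqrt{k/\Upsilon_G(k)})$ (valid since $\Upsilon_G(k)\ge k$), gives $\|\widehat{\chi}_i-f_i\|=O(k\sqrt{k/\Upsilon_G(k)})$, as claimed.

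The main obstacle is the $\mathrm{poly}(k)$ bookkeeping in the last two steps: passing from $\ell_2$ to $\ell_1$ control of the coefficient vectors $\gamma_i$, bounding $\|\widetilde M^{-1/2}-\Id\|$ in terms of $\|\widetilde M-\Id\|$, and --- most crucially --- orthonormalising the \emph{exact} projections $P_Wf_i$ rather than the combinations $\sum_j\gamma_{ij}\chi_{S_j}$, since only the former makes the relevant Gram perturbation $O(k/\Upsilon_G(k))$ rather than $O(\sqrt{k/\Upsilon_G(k)})$, which is precisely what makes the symmetric orthogonalisation both well-defined and accurate enough under only $\Upsilon_G(k)=\Omega(k^2)$. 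As an alternative, the first two steps can be packaged via the principal-angle identity $\|(\Id-\underline{\mathcal{I}})P_W\|_{\mathrm F}=\|(\Id-P_W)\underline{\mathcal{I}}\|_{\mathrm F}$ for the two $k$-dimensional subspaces, which directly yields $\|(\Id-P_W)f_i\|\le\sqrt{k/\Upsilon_G(k)}$, after which the third step is unchanged.
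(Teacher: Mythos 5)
Your argument is correct, and it reaches the stated bound by a genuinely different route from the paper. The paper simply invokes \lemref{structure1} (Peng et al.) as a black box to get $\|\tilde{\chi}_i - f_i\| = O(\sqrt{k/\Upsilon_G(k)})$ for the projections $\tilde{\chi}_i$ of the $f_i$ onto $\Span\{\chi_{S_1},\dots,\chi_{S_k}\}$, then uses the polarisation identity to see that the pairwise inner products $\langle \tilde{\chi}_i,\tilde{\chi}_j\rangle$ are $O(\sqrt{k/\Upsilon_G(k)})$ and orthonormalises by Gram--Schmidt, the extra factor $k$ in the lemma coming from accumulating these first-order errors over the $k$ steps. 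You instead re-derive the projection bound from scratch (the Rayleigh-quotient step $\chi_{S_i}^{\intercal}\calL_G\chi_{S_i}=\phi_G(S_i)\le\rho(k)$ giving $\|(\Id-\underline{\mathcal{I}})\chi_{S_i}\|^2\le 1/\Upsilon_G(k)$, then the basis-change or, more cleanly, the principal-angle/Frobenius identity for two $k$-dimensional subspaces), so your proof is self-contained where the paper's is not; and you replace Gram--Schmidt by symmetric (L\"owdin) orthogonalisation of the exact projections $P_W f_i$, exploiting that their Gram matrix deviates from $\Id$ only by the \emph{quadratic} quantity $\langle(\Id-P_W)f_i,(\Id-P_W)f_j\rangle = O(k/\Upsilon_G(k))$. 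This yields the stronger bound $O(\sqrt{k/\Upsilon_G(k)}+k^2/\Upsilon_G(k))$, which implies the lemma's $O(k\sqrt{k/\Upsilon_G(k)})$ since $\Upsilon_G(k)\ge k$, and it avoids exactly the error accumulation that the paper's terse Gram--Schmidt step leaves implicit. Two trivial remarks: your identity for $\langle\bar\chi_i,\bar\chi_j\rangle$ is missing a minus sign (orthogonality of $\chi_{S_i},\chi_{S_j}$ gives $\langle\underline{\mathcal{I}}\chi_{S_i},\underline{\mathcal{I}}\chi_{S_j}\rangle = -\langle(\Id-\underline{\mathcal{I}})\chi_{S_i},(\Id-\underline{\mathcal{I}})\chi_{S_j}\rangle$), which is immaterial since only the magnitude is used; and you should say explicitly that the bound $\phi_G(S_i)\le\rho(k)$ uses that $S_1,\dots,S_k$ is the partition achieving $\rho(k)$, as in the hypothesis of \lemref{structure1}.
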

To prove \lemref{structure}, we need the following lemma:

\begin{lemma}[\cite{peng15}]
\label{lem:structure1}
Let $\{S_i\}_{i=1}^k$ be a $k$-way partition of $G$ achieving $\rho(k)$, and let $\Upsilon_G(k)=\Omega\left(k^2\right)$. Assume that 
 $\tilde{\chi}_i$ is  the projection of $f_i$ in the span of $\{\chi_{S_1},\dots,\chi_{S_k}\}$. Then, it holds  for any $1\leq i\leq k$ that
\[
\left\|\tilde{\chi}_i - f_i\right\| = O\left(\sqrt{\frac{k}{\Upsilon_G(k)}}\right).
\]
\end{lemma}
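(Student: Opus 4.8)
The plan is to obtain $\widehat{\chi}_i$ by orthonormalising the projections furnished by \lemref{structure1}, and to control how far this last adjustment moves each vector. Write $U \triangleq \Span\{\chi_{S_1},\dots,\chi_{S_k}\}$, let $\Pi_U$ denote the orthogonal projection onto $U$, and set $\tilde{\chi}_i \triangleq \Pi_U f_i$, so that \lemref{structure1} gives $\|\tilde{\chi}_i - f_i\| = O(\theta)$ with $\theta \triangleq \sqrt{k/\Upsilon_G(k)}$. Assembling the orthonormal eigenvectors into the $n \times k$ matrix $F = [f_1 \mid \cdots \mid f_k]$ and writing $\tilde{X} = \Pi_U F = [\tilde{\chi}_1 \mid \cdots \mid \tilde{\chi}_k]$, I consider the Gram matrix $M \triangleq \tilde{X}^{\intercal}\tilde{X} = F^{\intercal}\Pi_U F$, which records the failure of the $\tilde{\chi}_i$ to be orthonormal.

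The conceptual heart of the argument is an operator-norm estimate: since $\Id - M = F^{\intercal}(\Id - \Pi_U)F = (\Pi_{U^{\perp}}F)^{\intercal}(\Pi_{U^{\perp}}F)$ is itself a Gram matrix, $\|\Id - M\| \le \|\Pi_{U^{\perp}}F\|_F^2 = \sum_{i=1}^k \|f_i - \tilde{\chi}_i\|^2 = O(k\theta^2) = O(k^2/\Upsilon_G(k))$. Under the hypothesis $\Upsilon_G(k) = \Omega(k^2)$, with the hidden constant chosen large enough, this is at most $1/2$, so $M$ is positive definite (hence $M^{-1/2}$ is well-defined) and its eigenvalues lie in $[1/2, 1]$ --- the upper bound because $\Pi_U \preceq \Id$ forces $M \preceq F^{\intercal}F = \Id$. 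It is essential to bound $\|\Id - M\|$ through the residuals $f_i - \tilde{\chi}_i$ rather than by summing off-diagonal entries of $M$ one at a time: the naive route gives only $\|\Id - M\| = O(k\theta)$, which is $\omega(1)$ when $\Upsilon_G(k) = \Theta(k^2)$ and would leave us unable to invert $M$. I expect this to be the main point to get right; the remainder is routine perturbation theory.

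I then set $\widehat{\chi}_i \triangleq \sum_{j=1}^k (M^{-1/2})_{ij}\,\tilde{\chi}_j$, equivalently $\widehat{X} \triangleq \tilde{X}M^{-1/2}$. Each $\widehat{\chi}_i$ lies in $U$ as a linear combination of the $\tilde{\chi}_j$, and $\widehat{X}^{\intercal}\widehat{X} = M^{-1/2}MM^{-1/2} = \Id$ shows that $\{\widehat{\chi}_i\}_{i=1}^k$ is orthonormal, which is the second assertion. For the approximation bound I split $\|\widehat{\chi}_i - f_i\| \le \|\widehat{\chi}_i - \tilde{\chi}_i\| + \|\tilde{\chi}_i - f_i\|$, the second term being $O(\theta)$. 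For the first, $\widehat{X} - \tilde{X} = \tilde{X}(M^{-1/2} - \Id)$ gives $\|\widehat{\chi}_i - \tilde{\chi}_i\| \le \|\tilde{X}\|\cdot\|M^{-1/2} - \Id\|$; here $\|\tilde{X}\| \le \|\Pi_U\|\,\|F\| \le 1$, and since the eigenvalues of $M$ lie in $[1/2,1]$ the elementary inequality $|\mu^{-1/2} - 1| = O(|1 - \mu|)$ yields $\|M^{-1/2} - \Id\| = O(\|\Id - M\|) = O(k\theta^2)$. Hence $\|\widehat{\chi}_i - f_i\| = O(\theta + k\theta^2)$; since $\theta = O(1)$ under the gap assumption, both terms are $O(k\theta)$, so $\|\widehat{\chi}_i - f_i\| = O(k\theta) = O\big(k\sqrt{k/\Upsilon_G(k)}\big)$, as required.
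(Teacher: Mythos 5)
You have not proved the statement at hand. The statement is \lemref{structure1} itself: the assertion, imported from Peng et al.~\cite{peng15}, that the projection $\tilde{\chi}_i$ of $f_i$ onto $\Span\{\chi_{S_1},\dots,\chi_{S_k}\}$ satisfies $\|\tilde{\chi}_i - f_i\| = O\bigl(\sqrt{k/\Upsilon_G(k)}\bigr)$. Your argument takes exactly this as a hypothesis (``the projections furnished by \lemref{structure1}'') and derives from it the orthonormal family $\{\widehat{\chi}_i\}_{i=1}^k$ with error $O\bigl(k\sqrt{k/\Upsilon_G(k)}\bigr)$ --- that is, you have written a proof of \lemref{structure}, the downstream lemma, not of \lemref{structure1}; as a proof of the target it is circular. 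The telltale sign is that you never use the data that actually drives the bound: the optimal partition enters only through the subspace it spans, and neither the conductances $\phi_G(S_i)\le\rho(k)$ nor the eigenvalue $\lambda_{k+1}$ (whose ratio defines $\Upsilon_G(k)$) appears anywhere. A genuine proof must connect the bottom eigenvectors of $\calL_G$ to the indicator vectors through these quantities, e.g.\ by expanding each $\chi_{S_i}$ in the eigenbasis, observing that its Rayleigh quotient equals $\phi_G(S_i)\le\rho(k)$, hence its mass on eigendirections with eigenvalue at least $\lambda_{k+1}$ is at most $\rho(k)/\lambda_{k+1}=1/\Upsilon_G(k)$, and then transferring this closeness of the two $k$-dimensional subspaces back to the individual $f_i$'s. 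None of this machinery is present in your write-up. (For calibration: the paper does not reprove this lemma either --- it is cited as a black box from \cite{peng15} --- so the expected content is precisely that structure theorem, not a consequence of it.)

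As an aside, viewed as a proof of \lemref{structure} your argument is correct and in fact tidier than the paper's sketch: the paper invokes Gram--Schmidt together with the pairwise bound $\langle\tilde{\chi}_i,\tilde{\chi}_j\rangle = O\bigl(\sqrt{k/\Upsilon_G(k)}\bigr)$, whereas your symmetric orthonormalisation $\widehat{X}=\tilde{X}M^{-1/2}$ with the operator-norm estimate $\|\Id - M\| \le \sum_{i=1}^k\|f_i-\tilde{\chi}_i\|^2 = O\bigl(k^2/\Upsilon_G(k)\bigr)$ controls all vectors at once and avoids the error-accumulation bookkeeping that Gram--Schmidt requires. But that observation does not discharge the present task: to answer the question asked, you must prove the projection bound of \lemref{structure1} itself.
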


\begin{proof}[Proof of \lemref{structure}]
Since $\{{f}_i\}_{i=1}^k$ is an orthonormal set, it holds by \lemref{structure1} that $\{\tilde{\chi}_i\}_{i=1}^k$ are \emph{almost} orthonormal. 
Hence, our task is to construct an orthonormal set $\{\widehat{\chi}_i\}_{i=1}^k$ based on $\{\tilde{\chi}_i\}_{i=1}^k$, which can be achieved by applying the Gram-Schmidt orthonormalisation procedure. The error bound follows from the fact that 
\begin{align*}
\left\langle \tilde{\chi}_i, \tilde{\chi}_j \right\rangle & = \frac{1}{2}\cdot \left( \|\tilde{\chi}_i \|^2 + \| \tilde{\chi}_j\|^2 - \|\tilde{\chi}_i - \tilde{\chi}_j \|^2  \right)= O\left(\sqrt{\frac{k}{\Upsilon_G(k)}}\right)
\end{align*}
holds for $i\neq j$. 
\end{proof}

Based on  \lemref{structure},  we will prove in the next lemma that,  for any cluster $S_1,\dots,S_k$ and  for most starting nodes $v \in S_j$,  $x^{(T)} $ is close to $\chi_{S_j}$.
\begin{lemma}
\label{lem:good}
Let $A \subseteq V$ be the subset of nodes such that, for any $j=1,\dots,k$ and any $v \in A \cap S_j$, setting $x^{(0)} = \chi_v$ we have that
\[
\left\| x^{(T)} - \frac{1}{\sqrt{\vol(S_j)}} {\chi}_{S_j}\right\| = O\left(\sqrt{\frac{ k^4 \log{(1/\beta)}}{\Upsilon_G(k) \beta \vol(V)}}\right).
\]
Then, it holds that 
\[
\vol(A) \ge \vol(V)\left(1 - \frac{\beta}{C\log (1/\beta)}\right),
\]
for some constant $C$.
\end{lemma}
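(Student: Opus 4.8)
The plan is to prove the statement by a second-moment / averaging argument: I would bound the degree-weighted average, over all possible starting nodes $v$, of the squared error $\big\|x^{(T)}-\tfrac{1}{\sqrt{\vol(S_j)}}\chi_{S_j}\big\|^2$ (with $v\in S_j$), and then read off the conclusion from Markov's inequality taken with respect to the measure $d_v$. The first observation to record is that the target vector is itself a projection of $\chi_v$: writing $W\triangleq\Span\{\chi_{S_1},\dots,\chi_{S_k}\}$ and $P_W$ for the orthogonal projection onto $W$, the $\chi_{S_i}$'s form an orthonormal set (disjoint supports, and $\|\chi_{S_i}\|^2=\vol(S_i)/\vol(S_i)=1$), so computing $\langle\chi_v,\chi_{S_i}\rangle=\mathbf{1}[v\in S_i]/\sqrt{\vol(S_i)}$ gives $P_W\chi_v=\tfrac{1}{\sqrt{\vol(S_j)}}\chi_{S_j}$ whenever $v\in S_j$. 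Hence the quantity I must control, for the process started from $x^{(0)}=\chi_v$, is exactly $\|x^{(T)}-P_W\chi_v\|$.

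The key step is to turn the $v$-by-$v$ estimate into a single matrix norm. Recall from the proof of \lemref{approx} that $x^{(T)}=\mat P^{T}x^{(0)}$ with $\mat P=\Id-\tfrac12\calL_G$, and that $\chi_v=e_v/\sqrt{d_v}$ for the $v$-th coordinate vector $e_v$. Therefore $d_v\,\|x^{(T)}-P_W\chi_v\|^2=\|(\mat P^{T}-P_W)e_v\|^2$, and summing over $v\in V$ yields
\[
\sum_{v\in V} d_v\,\big\|\mat P^{T}\chi_v-P_W\chi_v\big\|^2 \;=\; \big\|\mat P^{T}-P_W\big\|_{\mathrm F}^2 ,
\]
which is precisely the manoeuvre that neutralises the inconvenient $1/\sqrt{d_v}$ factors that individually blow up at low-degree nodes. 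To bound the right-hand side I would write $\mat P^{T}-P_W=(\mat P^{T}-\underline{\mathcal I})+(\underline{\mathcal I}-P_W)$ and use $\|A+B\|_{\mathrm F}^2\le 2\|A\|_{\mathrm F}^2+2\|B\|_{\mathrm F}^2$. For the first piece, $\|\mat P^{T}-\underline{\mathcal I}\|_{\mathrm F}^2=\sum_{i\le k}\big(1-(1-\lambda_i/2)^{T}\big)^2+\sum_{i>k}(1-\lambda_i/2)^{2T}$, and the eigenvalue estimates already established in \lemref{approx} — $1-(1-\lambda_i/2)^{T}=O(\log n\cdot\lambda_k/\lambda_{k+1})=O(\log n/\Upsilon_G(k))$ by the higher-order Cheeger inequality, and $(1-\lambda_i/2)^{2T}\le n^{-2c}$ for $i>k$ — bound it by $O\!\big(k\log^2 n/\Upsilon_G(k)^2\big)+n^{1-2c}$. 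For the second piece I would use the identity $\|\underline{\mathcal I}-P_W\|_{\mathrm F}^2=\tr(\underline{\mathcal I})+\tr(P_W)-2\tr(\underline{\mathcal I}P_W)=2\sum_{i=1}^{k}\big(1-\|P_W f_i\|^2\big)=2\sum_{i=1}^{k}\|f_i-P_W f_i\|^2$, and then \lemref{structure1} (whose $\tilde\chi_i$ is exactly $P_W f_i$) gives $\|f_i-P_W f_i\|=O(\sqrt{k/\Upsilon_G(k)})$, so $\|\underline{\mathcal I}-P_W\|_{\mathrm F}^2=O(k^2/\Upsilon_G(k))$. Under the gap assumption \eqref{eq:gap} the first piece is negligible next to the second, so altogether $\|\mat P^{T}-P_W\|_{\mathrm F}^2=O(k^2/\Upsilon_G(k))$.

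Finally I would set $\tau\triangleq \Theta\!\big(k^4\log(1/\beta)/(\Upsilon_G(k)\,\beta\vol(V))\big)$, i.e. the square of the bound claimed in the lemma with a large enough hidden constant, and take $A$ to be the set of nodes $v$ for which $\big\|x^{(T)}-\tfrac{1}{\sqrt{\vol(S_j)}}\chi_{S_j}\big\|^2\le\tau$ when the process is started from $\chi_v$. Then the weighted Markov inequality gives
\[
\vol(V\setminus A)\;\le\;\frac1\tau\sum_{v\in V} d_v\,\big\|\mat P^{T}\chi_v-P_W\chi_v\big\|^2\;=\;\frac1\tau\big\|\mat P^{T}-P_W\big\|_{\mathrm F}^2\;=\;O\!\Big(\frac{\beta\vol(V)}{k^2\log(1/\beta)}\Big),
\]
which, after choosing the constant in $\tau$ large, is at most $\tfrac{\beta}{C\log(1/\beta)}\vol(V)$, the asserted bound on $\vol(A)$. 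The step I expect to need the most care is checking that the $O(k\log^2 n/\Upsilon_G(k)^2)$ contribution (and the $n^{-2c}$-type tail) from $\|\mat P^{T}-\underline{\mathcal I}\|_{\mathrm F}$ really is lower order after dividing by $\tau$ — this is exactly where the $\log n$ summand in the gap assumption \eqref{eq:gap} is used — together with the easily-overlooked verification that the substitution $\chi_v=e_v/\sqrt{d_v}$ genuinely collapses the per-node errors into the Frobenius norm of a single fixed matrix rather than something $v$-dependent.
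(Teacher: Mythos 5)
Your proposal is correct and reaches the lemma by a genuinely different route from the paper. The paper argues pointwise: it first orthonormalises the projections $\tilde{\chi}_i$ into $\widehat{\chi}_i$ (\lemref{structure}, via Gram--Schmidt), checks that the projection of $\chi_v$ onto their span is $\chi_{S_j}/\sqrt{\vol(S_j)}$, splits the error by the triangle inequality into the power-iteration term of \lemref{approx} plus a structural term, isolates the per-node quantity $\alpha_v$, and only at the end averages $\sum_v d_v\alpha_v^2\le k^4/\Upsilon_G(k)$ to bound the bad set $B$. You instead observe that the target vector is exactly $P_W\chi_v$ for the honest orthogonal projection $P_W$ onto $\Span\{\chi_{S_1},\dots,\chi_{S_k}\}$ (the $\chi_{S_i}$ are already orthonormal), so that after weighting by $d_v$ the per-node errors collapse into the single Frobenius norm $\|\mat{P}^T-P_W\|_{\mathrm F}^2$, which you bound by $O\left(k\log^2 n/\Upsilon_G(k)^2+n^{1-2c}+k^2/\Upsilon_G(k)\right)$ and then finish with one degree-weighted Markov step; this is the same averaging idea the paper uses, but applied to the whole error rather than only to the structural part. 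Your route buys two things: \lemref{structure} is bypassed entirely (only \lemref{structure1} is needed, since $\tilde{\chi}_i=P_Wf_i$), and the structural error improves from $k^4/\Upsilon_G(k)$ to $k^2/\Upsilon_G(k)$, so your bound on the excluded volume is if anything stronger than required; what the paper's pointwise route buys is an explicit per-node error expression for every $v\notin B$ rather than control only on average. The delicate step you flag is real but is shared with the paper: your absorption of the $k\log^2 n/\Upsilon_G(k)^2$ piece needs roughly $\Upsilon_G(k)\gtrsim \log^2 n/k^3$, while the paper's absorption of its $\frac{\log n}{\Upsilon_G(k)\sqrt{\beta\vol(V)}}$ term (justified there only by $\log n/\Upsilon_G(k)=o(1)$) needs roughly $\Upsilon_G(k)\gtrsim \log^2 n/(k^4\log(1/\beta))$; neither is literally implied by \eq{gap} in every regime (e.g.\ constant $k$ and $\beta$ with $\Upsilon_G(k)$ between $\log n$ and $\log^2 n$), so on this point your argument is no less rigorous than the paper's, and the discrepancy is harmless for \thmref{cluster} since the extra contribution to the misclassified volume is $O(\log^2 n/\Upsilon_G(k)^2)\cdot\vol(V)=o(\vol(V))$.
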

\begin{proof}
Without loss of generality we assume
 $v \in S_j$, and let $\{\widehat{\chi}_i\}_{i=1}^k$ be the set of vectors  defined  in \lemref{structure}. We show that the projection of $\chi_v$ on $\Span\{\widehat{\chi}_1,\dots,\widehat{\chi}_k\}$ is exactly equal to $\frac{1}{\sqrt{\vol(S_j)}}\chi_{S_j}$. To this end, first notice that $\Span\{\widehat{\chi}_1,\dots,\widehat{\chi}_k\} = \Span\{{\chi}_{S_1},\dots,{\chi}_{S_k}\}$, since each $\widehat{\chi}_i$ is by definition a linear combination of vectors in $\{{\chi}_{S_i}\}_{i=1}^k$ and 
 \[
 \Dim\left(\Span\{\widehat{\chi}_1,\dots,\widehat{\chi}_k\}\right) = \Dim\left(\Span\{{\chi}_{S_1},\dots,{\chi}_{S_k}\}\right) = k.
 \]
  Then,
\begin{equation}
\label{eq:chistar}
  \sum_{i=1}^k \left\langle \chi_v, \widehat{\chi}_i \right\rangle \widehat{\chi}_i =  \sum_{i=1}^k \left\langle \chi_v, {{\chi}_{S_i}} \right\rangle {{\chi}_{S_i}} = \left\langle \chi_v, {\chi}_{S_j} \right\rangle {{\chi}_{S_j}} = \frac{1}{\sqrt{\vol(S_j)}} {\chi}_{S_j}.
\end{equation}
where the first equality holds by the fact that $\Span{\{\widehat{\chi}_1,\dots,\widehat{\chi}_k\}} = \Span{\{{\chi}_{S_1},\dots,{\chi}_{S_k}\}}$ and the orthonormality of the two sets of vectors, and the second holds because $\chi_v$ is orthogonal to every $\chi_{S_{\ell}}$ with $\ell \neq j$.

By the triangle inequality we have
\begin{equation}
\label{eq:separate}
\left\| x^{(T)} - \frac{1}{\sqrt{\vol(S_j)}} {\chi}_{S_j}\right\| \le \left\| x^{(T)} -\underline{\mathcal{I}}\chi_v\right\| + \left\| \underline{\mathcal{I}}\chi_v - \frac{1}{\sqrt{\vol(S_j)}} {\chi}_{S_j}\right\|.
\end{equation}
By \lemref{approx} we have 
\begin{equation}
\label{eq:bound0}
 \left\| x^{(T)} - \underline{\mathcal{I}}\chi_v\right\|  = O\left(\frac{\log{n} \cdot  \left\|\underline{\mathcal{I}}\chi_v\right\| }{\Upsilon_G(k)} + n^{-c}\right).
\end{equation}

For any $v\in V$, let 
 \[
 \alpha_v \triangleq \sqrt{\frac{1}{{d_v}}\sum_{i=1}^k \left(f_i(v) - \widehat{\chi}_i(v)\right)^2}.
 \]
 Then it holds that 
\begin{align}
\left\|\underline{\mathcal{I}}\chi_v\right\|^2
	&\;\;=\sum_{i=1}^k \left \langle \chi_v, f_i \right \rangle^2 \nonumber =\sum_{i=1}^k \left \langle \chi_v, \widehat{\chi}_i - \left(\widehat{\chi}_i - f_i\right) \right \rangle^2 \nonumber\\
	&\;\;=\sum_{i=1}^k \left(\left \langle \chi_v, \widehat{\chi}_i\right \rangle - \left \langle \chi_v, \widehat{\chi}_i - f_i \right \rangle \right)^2 \nonumber\\
	&\;\;\le \sum_{i=1}^k 2 \left(\left \langle \chi_v, \widehat{\chi}_i\right \rangle^2 + \left \langle \chi_v, \widehat{\chi}_i - f_i \right \rangle^2 \right) \label{eq:normbound1} \\
	&\;\;=  \frac{2}{\vol(S_i)} + 2 \sum_{i=1}^k \left \langle \chi_v, \widehat{\chi}_i - f_i \right \rangle^2  \label{eq:normbound2}\\
	&\;\;\leq   \frac{2}{\vol(S_i)}  + 2 \alpha_v^2  \label{eq:normbound3} 
\end{align}
where \eq{normbound1} follows from the inequality $(a-b)^2 \le 2(a^2+b^2)$, \eq{normbound2} follows from  \eq{chistar}, and \eq{normbound3} follows from the definition of $\alpha_v$. Hence, it holds that
\begin{align}
\label{eq:normprojbound}
\left\|\underline{\mathcal{I}}\chi_v\right\|= O\left( \frac{1}{\sqrt{\vol(S_i)}}  + \alpha_v \right).
\end{align}

For the second term in the right hand side of \eq{separate}, by \eq{chistar} and the triangle inequality we have that 
\begin{align}
\lefteqn{\left\| \underline{\mathcal{I}}\chi_v - \frac{1}{\sqrt{\vol(S_j)}} {\chi}_{S_j}\right\|} \nonumber\\
&  =  \left\|\sum_{i=1}^k  \langle \chi_v,f_i \rangle f_i - \sum_{i=1}^k  \langle \chi_v,f_i \rangle \widehat{\chi}_i 
								+  \sum_{i=1}^k\langle \chi_v,f_i \rangle \widehat{\chi}_i - \sum_{i=1}^k  \langle \chi_v,\widehat{\chi}_i \rangle \widehat{\chi}_i \right\| \nonumber \\
								& \leq  \left\|\sum_{i=1}^k  \langle \chi_v,f_i \rangle f_i - \sum_{i=1}^k  \langle \chi_v,f_i \rangle \widehat{\chi}_i \right\|
								+  \left\|\sum_{i=1}^k\langle \chi_v,f_i \rangle \widehat{\chi}_i - \sum_{i=1}^k  \langle \chi_v,\widehat{\chi}_i \rangle \widehat{\chi}_i \right\|  \label{eq:sep}
\end{align}
 Let's now analyse the two terms in \eq{sep} separately. For the first term, it holds that
 \begin{align}
 \left\|\sum_{i=1}^k  \langle \chi_v,f_i \rangle f_i - \sum_{i=1}^k  \langle \chi_v,f_i \rangle \widehat{\chi}_i \right\| &\le
  			\sum_{i=1}^k \left| \langle \chi_v, f_i \rangle \right| \left\|f_i - \widehat{\chi}_i \right\|  \nonumber\\
			&\le O\left(\sqrt{\frac{k^3}{\Upsilon_G(k)}}\right)\cdot \sum_{i=1}^k \left| \langle \chi_v, f_i \rangle \right| \le \sqrt{\frac{k^4}{\Upsilon_G(k)}} \left\|\underline{\mathcal{I}}\chi_v \right\|  \nonumber \\
			&= O\left(\sqrt{\frac{k^4}{\Upsilon_G(k) \beta\vol(V)}} + \alpha_v\right) \label{eq:sep00}
 \end{align}
 where the first inequality follows from the triangle inequality, the second by \lemref{structure} and the Cauchy-Schwarz inequality, and the last inequality follows  by \eq{normprojbound}. For the second term of \eq{sep}, we have
\begin{align}
\left\|\sum_{i=1}^k \left( \langle \chi_v,f_i \rangle - \langle \chi_v,\widehat{\chi}_i \rangle \right) \widehat{\chi}_i  \right\| 
 	&= \left\|\sum_{i=1}^k \frac{1}{\sqrt{d_v}}\left( f_i(v) -  \widehat{\chi}_i(v)  \right) \widehat{\chi}_i \right\| = \sqrt{\sum_{i=1}^k\frac{1}{{d_v}} \left(f_i(v) - \widehat{\chi}_i(v)\right)^2} = \alpha_v,\label{eq:sep0}
\end{align}
 where the second equality follows from the orthonormality of $\{\widehat{\chi}_i\}_i$. 
Putting all these inequalities above together, we have 
\begin{align}
\left\| x^{(T)} - \frac{1}{\sqrt{\vol(S_j)}} {\chi}_{S_j}\right\| 
&= O\Biggl(\frac{\log{n}}{\Upsilon_G(k) \cdot \sqrt{\beta \vol(V)}} + n^{-c} + \sqrt{\frac{k^4}{\Upsilon_G(k) \beta\vol(V)}} 
				 + \alpha_v \Biggr) . \label{eq:sep1}
\end{align}
 
 Let's define the set
 \[
 B\triangleq \left\{v| \alpha_v > \sqrt{\frac{C k^4\log{(1/\beta)}}{\Upsilon_G(k) \beta \vol(V)}} \right\},
 \]
 for some constant $C$. Let us look at $\vol(B)$ now.
 By \lemref{structure} we know that 
 \[
 \sum_{v \in V} d_v  \alpha_v^2 =  \sum_{i=1}^k \sum_{v \in V} \left(f_i(v) - \widehat{\chi}_i(v)\right)^2 = \sum_{i=1}^k \left\|f_i - \widehat{\chi}_i(v)\right\|^2 \le {\frac{k^4}{\Upsilon_G(k)}}.
\]
It follows that

\begin{equation}
\vol(B) \le  \frac{k^4}{\Upsilon_G(k)} \cdot \frac{\Upsilon_G(k) \beta \vol(V)}{C k^4\log{(1/\beta)}} = \frac{\beta \vol(V)}{C\log (1/\beta)}.
\end{equation}
By \eq{sep1}, and the definition of $B$, we have that for all $v \not\in B$, by setting $x^{(0)}=\chi_v$ we have that 
\begin{align*}
\left\| x^{(T)} - \frac{1}{\sqrt{\vol(S_j)}} {\chi}_{S_j}\right\| &= O\left(\frac{\log{n} }{ \Upsilon_G(k)\cdot \sqrt{\beta \vol(V)}} + \sqrt{\frac{k^4}{\Upsilon_G(k) \beta\vol(V)}} +\sqrt{\frac{Ck^4 \log{(1/\beta)}}{\Upsilon_G(k) \beta \vol(V)}}\right) \\
				&= O\left( \sqrt{\frac{k^4\log{(1/\beta)}}{\Upsilon_G(k) \beta \vol(V)}}\right),
\end{align*}
since it holds by \eq{gap} that  $\log{n}/\Upsilon=o(1)$. This implies that $V\setminus B\subseteq A$,
 yielding the claimed statement.
\end{proof}

So far we analysed the case for a single initial vector. To identify all the $k$ clusters simultaneously in $T$ rounds, we repeat this process multiple times with carefully chosen initial vectors. In particular, we need to ensure that we start the averaging process from at least one node in each cluster. This is the reason for us to introduce the \textsc{Seeding} step. By setting the probability $\bar{s}\cdot d_v/\vol(V)$ for every node $v$ to be active, it is easy to prove that with constant probability there is at least an active node in each cluster. 

Our analysis for the  \textsc{Query} step is based on the relation between our averaging procedure and  lazy random walks:
since any single random walk gets well mixed inside a cluster after $T$ steps, we expect that  the states of the nodes inside a cluster are similar.
 Namely, for the cluster $S_j$ to which the initial node of the $i$th vector belongs to, we expect that  
 $
x^{(T,i)}(v) \approx {1}/{\vol(S_j)}$
 for most $v\in S_j$ and $x^{(T,i)}(v) \approx 0$ otherwise. Hence, nodes from the same cluster will
choose the same label based on \eq{query}, while nodes from different clusters will have different labels. 


\begin{proof}[Proof of \thmref{cluster}]
For each node $v$, the probability that we start an averaging process with initial vector $\chi_v$ is equal to $ {\bar{s} \cdot d_v}/{\vol(V)}$, where $\bar{s} = \Theta\left(\frac{1}{\beta} \log\frac{1}{\beta}\right)$. Hence, the probability that there exists a $j$ such that no node from $S_j$ is chosen as initial node is at most
\[
\prod_{v \in S_j} \left(1 - \frac{\bar{s} \cdot d_v}{\vol(V)}\right) \le \prod_{v \in S_j} \mathrm{e}^{- {\bar{s} \cdot d_v}/{\vol(V)}} 
= \mathrm{e}^{-\bar{s}\sum_{v \in S_j} d_v / \vol(V)} \le \frac{1}{200 k},
\]
where we used the inequality $1-x \le \mathrm{e}^{-x}$ for $x \le 1$, the assumption on the size of the clusters, i.e., $\vol(S_j) \ge \beta \vol(V)$, and the trivial fact that $\beta \le 1/k$. As a consequence, with probability greater than $1/200$, for each cluster $S_j$, at least one node $v \in S_j$ is chosen as a starting node of the averaging process.

Next, we bound the probability that all the starting nodes belong to the set $A$ defined in \lemref{good}. By the algorithm description, the actual number of active nodes $s$ satisfies  $\Ex{s} = \bar{s}$. Therefore, it holds with probability $1-O(1)$ that  $s = O\left(\frac{1}{\beta} \log\frac{1}{\beta}\right)$. We assume that this event occurs in the rest of the proof. 
 Let $v_1,\dots,v_s$ be the starting nodes. By \lemref{good}, the probability that there exists a starting node $v_i$ not belonging to $A$ is at most
\begin{align*}
\Pro{\text{there is some starting node\ } v_i \not \in A} &\leq \frac{O(\bar{s}) \cdot (\vol(V)-\vol(A))}{\vol(V)}\le \frac{O(\bar{s})\cdot \beta}{C\log (1/\beta)} \le \frac{1}{200}.
\end{align*}
Hence, with  probability $1-O(1)$ every starting node belongs to $A$. For the rest of the proof we assume this is the case. For any node $v$, let $\mathcal{S}(v)$ be the cluster $v$ belongs to.
Then, by the definition of the set $A$, it holds  for any starting node $v_i$ that 
\begin{equation}
\label{eq:closenorm}
\left\| x^{(T,i)} - \frac{{\chi}_{\mathcal{S}(v_i)}}{\sqrt{\vol(\mathcal{S}(v_i))}} \right\| = O\left(\sqrt{\frac{k^4 \log{(1/\beta)}}{\Upsilon_G(k)\beta \vol(V)}}\right).
\end{equation}
 Observe that a node $v$ is misclassified by the algorithm only if there exists $i \in \{1,\dots,s\}$ such that
\begin{equation}
\label{eq:mis}
 \left|\frac{x^{(T,i)}(v)}{\sqrt{d_v }}-\frac{\chi_{\mathcal{S}(v_i)}(v)}{\sqrt{d_v \cdot \vol(\mathcal{S}(v_i))}}\right|^2 >  {\frac{1}{4\beta^2 \vol(V)^2}}.
\end{equation}
Then, by \eq{closenorm} the total volume of misclassified nodes is at most
\begin{align*}
\lefteqn{O(\bar{s}) \cdot  \left\| x^{(T,i)} - \frac{1}{\sqrt{\vol(\mathcal{S}(v_i))}} {\chi}_{\mathcal{S}(v_i)}\right\|^2 \cdot 4\beta^2 \vol(V)^2 }\\
	&= O\left(\frac{1}{\beta} \log\frac{1}{\beta} \cdot \frac{k^4}{\Upsilon_G(k)} \cdot  {\frac{  \log{(1/\beta)}}{ \beta \vol(V)}} \cdot \beta^2 \vol(V)^2\right) \\
	&= O\left( \frac{k^4}{\Upsilon_G(k)} \log^2\frac{1}{\beta} \right) \vol(V).
\end{align*}
Combining this with the assumption \eq{gap} proves the first statement.
The second statement follows by the fact that the total communication among all nodes in each round is  $O(m \cdot  (1/\beta)
 \log(1/\beta))$ words.
\end{proof}

\section{Experiments}
\label{sec:exp}

Now we present experimental results for our sparsification algorithm on both synthetic and real-world datasets. To report a detailed and quantitive  result, we will compare the clustering results of the following two approaches: (1) apply spectral clustering on the original input dataset; (2)  apply spectral clustering on the graph returned by our sparsification algorithm.

Besides giving the visualised results of our algorithm on various datasets, we use two functions to measure the quality of the above-mentioned two approaches:
(1)  For the synthetic datasets for which the underlying ground-truth clustering is known, 
 the quality of a clustering algorithm is measured  by the ratio of misclassified points, i.e., 
 \[
 \mathsf{err}(A_1,\ldots, A_k)\triangleq \frac{1}{n}\cdot \sum_{i=1}^k | \{v\in A_i: v\not\in S_i \} |,
 \]
where $\{S_1,\cdots, S_k\}$ is the underlying ground-truth clustering and $\{A_1,\dots,A_k\}$ is the one returned by the clustering algorithm.
(2) 
 For datasets for which a ground-truth clustering is not well-defined, 
 the quality of a clustering is measured by  the \emph{normalised cut value}  defined by $$
\mathsf{ncut}(A_1,\ldots, A_k)  \triangleq \sum_{i=1}^k \frac{w(A_i, V\setminus A_i)}{\vol (A_i)},$$
which is a standard objective function to be minimised for spectral clustering algorithms \cite{shiMalikNC,vonLuxSC}.  All the experiments are conducted with Matlab and we use an implementation of the classical spectral clustering algorithm described in~\cite{ng2001spectral}.

\subsection{Datasets} We test the algorithms in the following three  synthetic and real-world datasets, which are visualised in \figref{data}. 

\begin{itemize}
\item \texttt{Twomoons}: this dataset consists of $n$ points in $\mathbb{R}^2$, where $n$ is chosen between $1,000$ and $15,000$. 
We consider each point to be a node. For any two nodes $u,v$, we add an edge with weight $w(u,v) = \exp(- \| u- v \|^2/2\sigma^2)$, where $\sigma=0.1$. 
\item \texttt{Gaussians}: this dataset consists of $n$ points in $\R^2$, where $n$ is chosen between $1,000$ and $15,000$.  Each point is sampled from a uniform mixture of 3 isotropic Gaussians of variance $0.04$. The similarity graph is constructed in the same way as \texttt{Twomoons}, and we set $\sigma=1$ here.
\item \texttt{Sculpture}: we use a $73\times 160$ version of a photo of \emph{The Greek Slave}\footnote{\href{http://artgallery.yale.edu/collections/objects/14794}{http://artgallery.yale.edu/collections/objects/14794}} where each pixel is viewed as a node. To construct a similarity graph, we map each pixel to a point in $\R^5$, i.e., $(x,y,r,g,b)$, where the last three coordinates are the RGB values. For any two nodes $u,v$, we put an edge between $u$ and $v$ with weight $w(u,v) = \exp(- \|u-v\|^2/2\sigma^2)$, where $\sigma=20$. This results in a graph with about $11,000$ nodes and $k=3$ clusters.
\end{itemize} 
These datasets are essentially the ones used in \cite{CSWZ16}, which  studies the effects of spectral sparsification on  clustering. This   makes it possible to easily compare our results with the state-of-the-art. The choice of $\sigma$ varies for different datasets, since
 they have in general different intra-cluster variance. 
 There are several heuristics to choose the ``correct'' value of $\sigma$ (see, e.g., the classical reference~\cite{ng2001spectral}). In our case the value of $\sigma$ is chosen so that the spectral gap of the original similarity graph is large. This ensures that the clusters in the graph are well-defined, and spectral clustering outputs a meaningful clustering.

\begin{figure}[h]
    \centering
    \begin{subfigure}[b]{0.25\textwidth}
        \includegraphics[width=\textwidth]{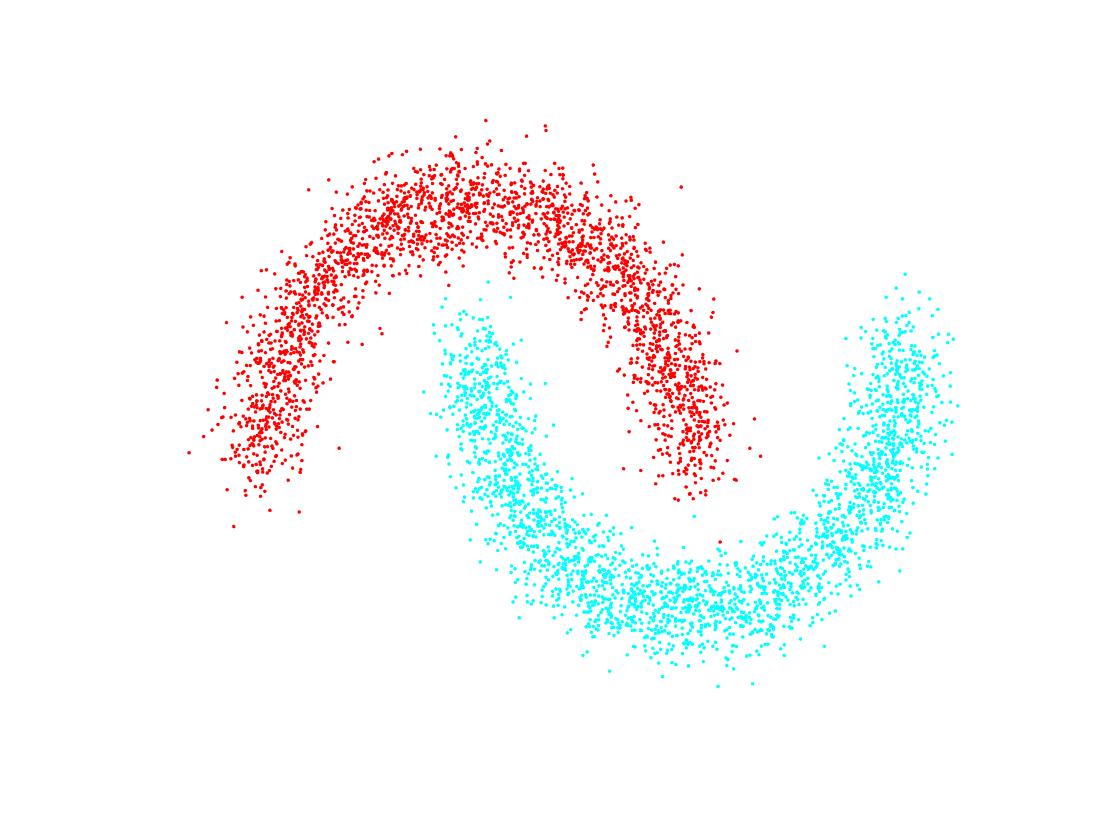}
        \caption{\texttt{Twomoons}}
        \label{fig:moons}
    \end{subfigure}
    \qquad\qquad
     \begin{subfigure}[b]{0.25\textwidth}
        \includegraphics[width=\textwidth]{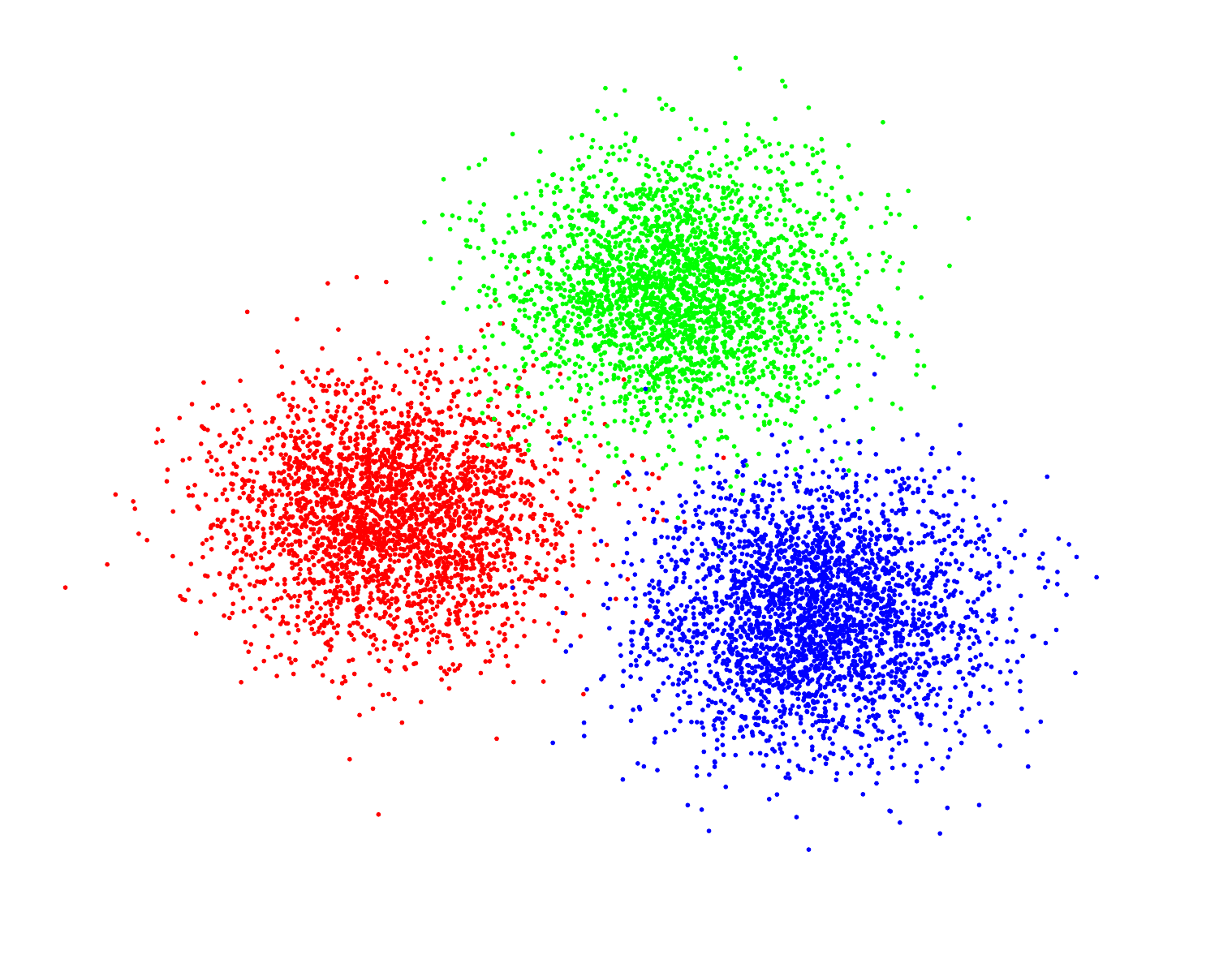}
        \caption{\texttt{Gaussians}}
        \label{fig:gauss}
    \end{subfigure}
        \qquad\qquad
    \begin{subfigure}[b]{0.25\textwidth}\centering
        \includegraphics[width=0.4\textwidth]{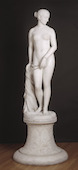}
        \caption{\texttt{Sculpture}}
        \label{fig:scu}
    \end{subfigure}
    \caption{Visualisation of the  datasets used in our experiments.}\label{fig:data}
\end{figure}

\subsection{Results on Clustering Quality}  We test the performance of our algorithm on the three datasets. Notice that the sampling probability 
of the edges in our sparsification algorithm involves the factor $\tau \ge C/\lambda_{k+1}$. To find a desired value of $\tau$, we use the following doubling method: starting with $\tau=0.1$, we double the value of $\tau$ each time, until the spectral gap $|\lambda_{k+1}-\lambda_k|$ of the resulting matrices doesn't  change significantly. Remarkably, for all the datasets considered in the paper, $\tau=1.6$ always suffices for our purposes. Notice that this method will only increase the time complexity of our algorithm by at most a poly-logarithmic factor of $n$.

For the \texttt{Twomoons} and \texttt{Gaussians} datasets, for all the tested graphs with size ranging from $1,000$ to $15,000$ points, our sparsified graphs require only about $0.14\% \sim 3.13\%$ of the total edges. The error ratios of spectral clustering on the original datasets and our sparsified graphs are listed respectively as $\mathsf{err}1$ and $\mathsf{err}2$, and are always very close. See Table~\ref{twomoontable} and Table~\ref{gausstable}  for details.

\begin{table}[t]
\caption{Experimental results for the \texttt{Twomoons} dataset, where $\tau=0.8$.}
\label{twomoontable}
\vskip 0.15in
\centering
\begin{tabular}{lllll}
\toprule
  $n$ & $\#$ edges~($\%$) & $\mathsf{err}1$~($\%$) & $\mathsf{err}2$~($\%$)\\
\midrule
$1,000$   & $1.56$ & $0.900$ & $0.600$ \\
$2,000$ & $0.86$ &  $0.150$ & $0.100$ \\
$4,000$    & $0.48$ & $0.150$ & $0.175$ \\
$8,000$   & $0.26$ & $0.086$ & $0.088$         \\
$10,000$     & $0.22$ &  $0.120$ & $0.150$ \\
$15,000$      & $0.14$ &  $0.080$ & $0.100$ \\
\bottomrule
\end{tabular}
\end{table}

\begin{table}[t]
\caption{Experimental results for the \texttt{Gaussians} dataset, where $\tau=1.6$.}
\label{gausstable}
\centering
\begin{tabular}{lllll}
\toprule
  $n$ & $\#$ edges~($\%$) & $\mathsf{err1}$~($\%$) & $\mathsf{err2}$~($\%$) \\
\midrule
$1,000$   & $3.13$ & $1.700$ & $1.900$ \\
$2,000$ & $1.75$ &  $1.350$ & $1.650$ \\
$4,000$    & $0.96$ & $0.400$ & $0.417$ \\
$8,000$   & $0.66$ & $0.128$ & $0.140$         \\
$10,000$     & $0.42$ &  $0.113$ & $0.119$ \\
$15,000$      & $0.29$ &  $0.125$ & $0.148$ \\
\bottomrule
\end{tabular}
\end{table}

The \texttt{Sculpture} dataset corresponds to a similarity graph of $n=11,680$ nodes and $68$ million  edges.   We run  spectral clustering on both the input graph and  our sparsified one, 
 and compute the normalised cut values of each clustering in the original input graph. By  setting $\tau=1.6$, our algorithm samples only  $0.37\%$ of the edges~($320,000$) from the input graph.
 The normalised cut value of spectral clustering on the original dataset is $0.0938$, while the normalised cut value of spectral clustering on our sparsified graph is $0.0935$. The visualisation of the two clustering results are almost identical, as shown in Figure~\ref{fig:rescu}.

\begin{figure*}[htb]
   \centering
    \begin{subfigure}[b]{0.20\textwidth}\centering
        \includegraphics[width=0.45\textwidth]{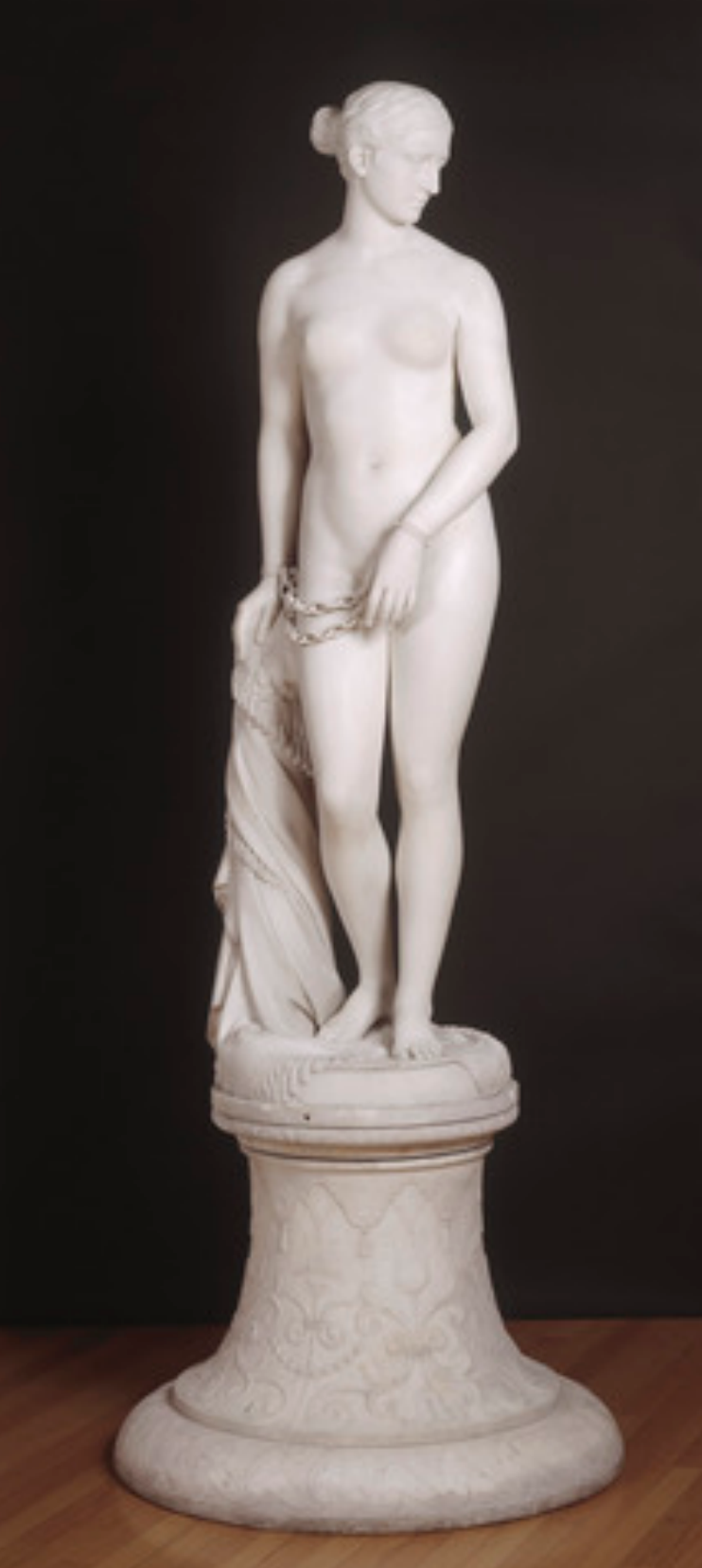}
        \label{fig:scu}
    \end{subfigure}
    \qquad\qquad
    \begin{subfigure}[b]{0.20\textwidth}\centering
        \includegraphics[width=0.45\textwidth]{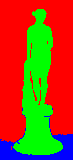}
        \label{fig:scu2}
    \end{subfigure}
    \qquad\qquad
     \begin{subfigure}[b]{0.2\textwidth}\centering
        \includegraphics[width=0.45\textwidth]{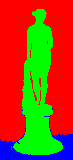}
        \label{fig:scu2}
    \end{subfigure}
\caption{Visualisation of the results on \texttt{Sculpture}. The left-side picture is the original input dataset, the middle one is the outcome of spectral clustering on the original input dataset, while the right-side picture is the outcome of spectral clustering on our sparsified graph.}\label{fig:rescu}
\end{figure*}


\textbf{Acknowledgement:}
We would like to thank Dr. Emanuele Natale and Prof. Luca Trevisan who found a gap in the analysis of the algorithm in an earlier version of our paper. We present an improved  algorithm to avoid the issues occurring in the previous version of our paper. 
Moreover, compared with the previous one, our improved algorithm  does not require the input graph to be regular.

\bibliographystyle{alpha}

\bibliography{references,ref} 

\end{document}